\documentclass[a4paper,fleqn]{cas-sc}

\usepackage[utf8]{inputenc}
\usepackage[T1]{fontenc}

\usepackage[numbers]{natbib}
\usepackage{graphicx}
\usepackage{graphics}
\usepackage[svgnames]{xcolor}
\usepackage{fca}
\usepackage{mathtools}
\usepackage{enumitem}
\usepackage{booktabs}
\usepackage{tikz}
\usetikzlibrary{shapes.symbols, arrows.meta, positioning, fit, backgrounds}
\usepackage{tikzscale}
\usepackage[outline]{contour}   
\contourlength{2.5pt}  
\dimendef\prevdepth=0    
\usepackage{svg}
\usepackage[ruled,vlined]{algorithm2e}
\usepackage{cleveref}
\usepackage{amsmath, amssymb}  
\usepackage{lipsum} 
\usepackage{comment}
\usepackage{siunitx}
\usepackage{amsthm}
\usepackage{bm}
\usepackage{caption,subcaption}
\usepackage{pict2e}
\usepackage{wrapfig}
\usepackage{xcolor}
\usepackage{xfrac} 

\def\tsc#1{\csdef{#1}{\textsc{\lowercase{#1}}\xspace}}
\tsc{WGM}
\tsc{QE}

\newtheorem{theorem}{Theorem}

\newdefinition{rmk}{Remark}
\newproof{pf}{Proof}
\newproof{pot}{Proof of Theorem \ref{thm}}

\newcommand{\old}[1]{{\color{teal} #1}}

\renewcommand{\old}[1]{ }
\newcommand{\commentout}[1]{ }

\newcommand{\R}{\mathbb{R}}

\newcommand{\ie}{i.\,e.\xspace}

\newcommand{\K}{\mathbb{K}}

\newcommand{\N}{\mathbb{N}}
\newcommand{\B}{\mathfrak{B}}
\newcommand{\uB}{\underline{\mathfrak{B}}}
\newcommand{\BB}{\mathfrak{B}}

\newcommand{\LD}{\mathrm{LD}} 

\newcommand{\Add}{\mathrm{Add}} 
\newcommand{\AddLD}{\mathrm{AddLD}} 
\newcommand{\reLD}{\mathrm{reLD}} 

\DeclareMathOperator{\MOD}{mod}
\DeclareMathOperator{\REP}{rep}
\DeclareMathOperator{\REPD}{repd}
\DeclareMathOperator{\POS}{pos}
\DeclareMathOperator{\VEC}{vec}
\DeclareMathOperator{\DIM}{dim}
\DeclareMathOperator{\INC}{inc}
\DeclareMathOperator{\RANK}{rank}
\newcommand{\RR}{\mathcal{R}}
\newcommand{\JJ}{\mathcal{J}}
\newcommand{\MM}{\mathcal{M}}
\newcommand{\CC}{\mathcal{C}}
\newcommand{\BBK}{\BB(\K)}
\newcommand{\UBBK}{\UBB(\K)}
\newcommand{\FMT}{{F\!M}(3)}
\newcommand{\DC}{\mathrel{\dot{\cup}}}

\newcommand{\UBB}{\underline{\BB}}
\newcommand{\PM}{\mathfrak{P}}

\let\cref\Cref

\begin{document}
\shorttitle{DimFlux: Force-Directed Additive Line Diagrams}

\shortauthors{M. Nöhre et~al.}

\title[mode=title]{DimFlux: Force-Directed Additive Line Diagrams}

\author[1,2]{Marcel Nöhre}[orcid=0009-0005-4089-2925]
\ead{noehre@cs.uni-kassel.de}
\author[3]{Dominik Dürrschnabel}[orcid=0000-0002-0855-4185]
\ead{dominik.duerrschnabel@mosbach.dhbw.de}
\author[4]{Bernhard Ganter}[orcid=0000-0003-0767-1379]
\ead{bernhard.ganter@tu-dresden.de}
\author[1,2]{Gerd Stumme}[orcid=0000-0002-0570-7908]
\ead{stumme@cs.uni-kassel.de}
\affiliation[1]{organization={Knowledge \& Data Engineering Group},
	addressline={University of Kassel}, 
	postcode={34121}, 
	city={Kassel},
	country={Germany}}
\affiliation[2]{organization={Interdisciplinary Research Center for Information Systems Design},
	addressline={University of Kassel}, 
	postcode={34121}, 
	city={Kassel},
	country={Germany}}
\affiliation[3]{organization={Baden Wuerttemberg Cooperative State University (DHBW) Mosbach},
	postcode={74821},
	city={Mosbach},
	country={Germany}}
\affiliation[4]{organization={School of Science},
	addressline={TU Dresden}, 
	postcode={01069}, 
	city={Dresden},
	country={Germany}}

\begin{abstract}
The visualization of concept lattices is a central problem in the field of Formal Concept Analysis. Force-directed algorithms, as popular in graph drawing, are a promising approach, treating lattice diagrams as physical models, optimizing node positions based on forces derived from the lattice structure. We build on the work of Zschalig, who, however, limited himself to attribute-additive diagrams. We use a more general additivity, in which both the attributes and the objects contribute to the positions of the concept nodes.

We replace the planarity enhancer used by Zschalig to obtain a starting diagram for force-directed optimization with the DimDraw algorithm, which generates structured order diagrams on its own. The combination results in \textbf{DimFlux}, an algorithm that leverages the advantages of DimDraw but generates additive diagrams in which readability is increased by maximizing the conflict distance between nodes and non-incident edges.
\end{abstract}

\begin{keywords}
	Formal Concept Analysis \\
	Force-Directed Placement \\
	Additive Line Diagrams \\
	DimDraw \\
	DimFlux
\end{keywords}

\maketitle
\section{Introduction}
\label{sec:introduction}
Line diagrams of concept lattices are a powerful tool for data visualization. However, the ability to produce a layout that truly reflects a lattice's underlying structure remains a specialized skill held by a few experts. To make Formal Concept Analysis more accessible to a broader audience, there is a clear need for a universal drawing algorithm that automates this process.

A line diagram of a concept lattice shows one node in the drawing plane for each formal concept, and a straight, upward-directed line from each concept node to each of its upper covers. Different nodes have to be placed on different positions, and no node shall lie on a non-incident edge. Besides these hard constraints, there are several soft conditions to improve the readability of the drawings, such as maximizing the distance between nodes and non-incident edges, minimizing edge crossings or different slopes, and aligning nodes into vertical chains~\cite{wille1989lattices}. By now, there is no universal algorithm that perfectly balances all these factors. Various algorithms have been developed that focus on subsets of these conditions (see Section~\ref{sec:preliminaries} for details).

In this paper, we present an approach that combines two structural properties that can greatly improve the readability of a line diagram: a) being additive~\cite{wille1989lattices} and b) being realizer-embedded~\cite{Duerrschnabel2023}. 

Ad a): A line diagram of an ordered set $P$ is said to be \emph{additive} if the positions of the nodes are determined as certain sums of a fixed set of vectors, see Section~\ref{sec:additive_diagrams} for a precise definition. 

This construction can greatly improve readability by producing parallelograms in distributive parts of the lattice --- which are surprisingly frequent in real-world data~\cite{abdulla2025rises}. While there are canonical ways for selecting the set of nodes that will be equipped with the initial vectors
(see Section~\ref{sec:additive_diagrams}), the challenge is to define suitable vectors. For \emph{attribute-additive diagrams}, C. Zschalig has developed an algorithm that locally maximizes the conflict distance between nodes and non-incident edges using a force-directed placement model, by treating the lattice as a physical system~\cite{ZschaligFDP}. While this algorithm produces well readable diagrams for small lattices, for larger lattices it becomes trapped in local maxima and produces sub-optimal diagrams. To align with the more general \emph{doubly-additive diagrams}, we will present an extension of this algorithm in Section~\ref{sec:fdp_doubly}.

Ad b): In a line diagram of an ordered set $P$ that is realizer-embedded, the position of the node that is representing an element $p$ of $P$ is  determined by the position of $p$ in the two linear orders of a realizer of a suitable two-dimensional extension, see Section~\ref{sec:additivity_dim_draw} for more details.

Unfortunately, there are concept lattices for which no line diagrams exist that are realizer-embedded and additive at the same time (see Theorems~\ref{thm:additive_realizer_rd} and~\ref{thm:additive_extension}). Therefore, we project the realizer-embedded line diagrams to the closest additive diagram, which serves as the initial layout for the subsequent optimization using the extension of the force-directed model for doubly-additive line diagrams. This algorithm, which we name \textbf{DimFlux}, combines the advantages of both approaches.  

Overall, the contributions of our paper are:
\begin{enumerate}
	\item Extension of Zschalig's force-directed placement approach to doubly-additive line diagrams, 
	\item method for projecting any line diagram to the closest additive one, 
	\item proof of the incompatibility of being additive and being realizer-embedded, 
	\item the DimFlux algorithm, which first generates a realizer-embedded diagram, which is then projected into the space of additive diagrams, where subsequently the conflict distance is maximized.
\end{enumerate}

The paper is organized as follows: after recalling the basics of Formal Concept Analysis and lattice theory in Section~\ref{sec:preliminaries}, we introduce the additive representations used throughout this work in Section~\ref{sec:additive_diagrams} and describe the projection of line diagrams into the additive space in Section~\ref{sec:additive_projection}. We then provide an overview of Zschalig's force-directed placement approach in Section~\ref{sec:fdp_m}, which we extend for doubly-additive diagrams in Section~\ref{sec:fdp_doubly}. Section~\ref{sec:additivity_dim_draw} investigates the additivity of realizer-embedded line diagrams, leading to the introduction of our new algorithm DimFlux for drawing concept lattices in Section~\ref{sec:dim_flux}. Finally, we evaluate DimFlux against state-of-the-art drawing algorithms for concept lattices in Section~\ref{sec:evaluation} and show its limitations in Section~\ref{sec:limit}. Section~\ref{sec:conclusion} concludes our work and discusses future research.

\section{Formal Concept Analysis and line diagrams}
\label{sec:preliminaries}
In the following, we recall the basics of Formal Concept Analysis (FCA) and lattice theory~\cite{GanterFCA2024}, and introduce the notation used throughout this work.

\subsection{Concept lattices}
\label{subsec:concept_lattices}
A \emph{formal context} is defined as a triple $\K := \GMI$, where $G$ and $M$ represent sets of objects and attributes, respectively, and $I \subseteq G \times M$ is a binary relation. For an object $g \in G$ and an attribute $m \in M$, the notation $g \relI m$  (or $(g, m) \in \relI$) indicates that object $g$ has attribute $m$.

For any subset of objects $A \subseteq G$, we define $A'$ as the set of attributes common to all objects in $A$. Dually, for any subset of attributes $B \subseteq M$, $B'$ is the set of all objects that have all attributes in $B$:
$$
\begin{aligned}
    A' &:= \{ m \in M \mid \forall g \in A \colon (g, m) \in I \} \\
    B' &:= \{ g \in G \mid \forall m \in B \colon (g, m) \in I \}\enspace.
\end{aligned}
$$
A \emph{concept lattice} of a formal context $\GMI$ is the set of all formal concepts
$$
\BGMI := \{ (A, B) \mid A \subseteq G, B \subseteq M, A = B', B = A' \}\enspace,
$$
equipped with the partial order
$$
(A_1, B_1) \leq (A_2, B_2)  :\iff A_1 \subseteq A_2\enspace.
$$
The resulting structure $\BVGMI = (\BGMI, \leq)$ is a complete lattice. For any subset of formal concepts $C \subseteq \BGMI$, the supremum (join) and infimum (meet) are given by:
$$
\bigvee_{(A, B) \in C} (A, B) = \left( \biggl( \bigcup_{(A, B) \in C} A \biggr)'', \bigcap_{(A, B) \in C} B \right), \quad \bigwedge_{(A, B) \in C} (A, B) = \left( \bigcap_{(A, B) \in C} A, \biggl( \bigcup_{(A, B) \in C} B \biggr)'' \right)\enspace.
$$
A formal concept $c$ is \emph{join-irreducible} if it cannot be represented as the join of its predecessors, satisfying the condition $c > \bigvee \{ \tilde{c} \in \BBK \mid \tilde{c} < c \}$. Dually, it is \emph{meet-irreducible} if $c < \bigwedge \{ \tilde{c} \in \BBK \mid \tilde{c} > c \}$.

Every object $g \in G$ generates an \emph{object concept} denoted as $\gamma (g) := (\{ g \}'', \{ g \}')$, which is called \emph{irreducible} if it cannot be expressed as the join of other concepts. Dually, every attribute $m \in M$ generates an \emph{attribute concept} denoted as $\mu (m) := (\{ m \}', \{ m \}'')$ that is called irreducible if it cannot be expressed as the meet of other concepts. A formal context can be streamlined into a reduced form that preserves the entire lattice structure while eliminating redundant information. Specifically, a formal context $\K = \GMI$ is called \emph{reduced} if the mapping $\gamma \colon G \to \BBK$ is an injection onto the set of join-irreducible concepts $\JJ(\UBBK)$, and the mapping $\mu \colon M \to \BBK$ is an injection onto the set of meet-irreducible concepts $\MM(\UBBK)$.
For a finite lattice $L$, the context $(\JJ(L), \MM(L), \leq)$ is called its \emph{standard context}; and $\uB(\JJ(L), \MM(L), \leq)$ is isomorphic to $L$.

We denote the smallest element of a finite lattice $L$ by $\bot$ and the largest element by $\top$. The \emph{covering relation} $\prec$ of $L$ is defined by $u \prec v$ if $u < v$ and there is no element $z$ such that $u < z < v$. We then call $u$ a \emph{lower cover} of $v$, and $v$ an \emph{upper cover} of $u$. The \emph{rank} of elements of $L$ is defined recursively: $\RANK(\bot):=0$ and $\RANK(v):=\max\{\RANK(u) \mid u \prec v \} + 1$.  

\subsection{Line diagrams}
\label{subsec:line_diagrams}

Line diagrams of concept lattices are a powerful tool for data visualization.  We can formalize the definition as follows: A line diagram of a lattice $L$ represents each element $u$ of $L$ by a node $\POS(u)$ in the drawing plane $\R^2$. For $u\prec v$ we request that $\POS(u)_y < \POS(v)_y$, and we link both nodes with a straight line. (Here and in the following, we ignore the additional requirement that two nodes should not be positioned at the same position, nor should a node be positioned on a non-adjacent line. This condition will need to be checked on a case-by-case basis.)

A line diagram is thus completely defined by the positions of all nodes, \ie by a mapping $\POS \colon L \to \R^2$. The set of all line diagrams of a lattice $L$ is a subset of the set $(\R^2)^L$ of all mappings from $L$ to $\R^2$, which is isomorphic to $\R^{2|L|}$. It is given by  
$$
\LD(L) := \{\POS\in (\R ^2)^L \mid \forall u,v\in L \colon u \prec v \Rightarrow \POS(u)_y < \POS(v)_y\}\enspace.
$$
$\LD(L)$ is a convex cone, \ie, if $\alpha\in\R_{>0}$ and $\POS_1\in\LD(L)$ and $\POS_2\in\LD(L)$, then $\alpha\cdot\POS_1\in\LD(L)$ and $\POS_1+\POS_2\in\LD(L)$. This convex cone is an open set, as it is the intersection of all open half-spaces of the form $\{\POS\in (\R^2)^L \mid \POS(u)_y < \POS(v)_y\}$, with one open half-space for each couple $u\prec v$ in the covering relation. For convenience, we let $\LD(\K):=\LD(\uB(\K))$ for a context $\K$. 

Computing visually appealing layouts remains a persistent challenge with no universal solution. A drawing must additionally satisfy further structural constraints: it must prevent concept nodes from overlapping with non-incident edges. Furthermore, the readability is refined through several soft conditions, such as minimizing the number of edge crossings and different slopes, while maximizing the length of chains or the distance between nodes and non-incident edges.

While there are several reasonably fast algorithms for generating concept lattices and their neighborhood relations, generating good line diagrams is apparently more difficult. Common algorithms attempt to optimize various criteria: The algorithm of K. Sugiyama et al. determines the $y$-coordinate by a ranking function and then iteratively tries to minimize the number of line crossings between adjacent levels~\cite{sugiyama1981methods}. The algorithm of A. Aeschlimann and J. Schmid minimizes the total length of the edges~\cite{aeschlimann1992drawing}. The algorithm of R. Freese first generates a diagram in 3D by assigning the $z$-coordinates by a rank function and the $x$- and $y$-coordinates by a force-based approach, followed by a projection of this diagram to 2D~\cite{freese2004automated}. Zschalig's algorithm generates attribute-additive diagrams by a force-directed mechanism~\cite{ZschaligFDP}. DimDraw computes a minimal extension of the order relation such that the result has order dimension 2, and then uses, for each node, its position in the two linear extensions of the realizer as coordinates~\cite{Duerrschnabel2023}. The last two algorithms are described in more detail in Sections~\ref{sec:fdp_m} and~\ref{sec:additivity_dim_draw}.

While we usually assume that line diagrams are displayed on a two-dimensional plane, we may consider `line diagrams' in higher dimensions, either for animated visualisations or for theoretical reasons --- as for instance in the previously mentioned algoritm of R. Freese. Such a $d$-dimensional line diagram of a lattice $L$ is then represented by a mapping $\POS \colon L \to \R^d$, which respects the order in the last dimension, \ie, $u\prec v \Rightarrow \POS(u)_d < \POS(v)_d$. We denote the set of all $d$-dimensional line diagram of a lattice $L$ by $\LD^d(L)$.
\section{Additive diagrams}
\label{sec:additive_diagrams}
One approach to addressing the persistent challenge of computing visually appealing layouts is to use additive diagrams~\cite{wille1989lattices}. We illustrate this using a small example that has already been used in~\cite{GanterConflict2004}. Figure~\ref{fig:bungled} shows a $5\times 4$ context of the five dwarf planets and an (admittedly amateurish and --- as we will see --- non-additive) line diagram of its concept lattice.

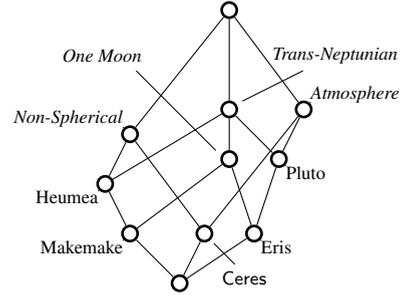
\begin{figure}
    \centering
    \begin{subfigure}[b]{0.625\textwidth}
        \centering
        \begin{cxt}[b]
            \att{Non-Spherical}\att{Atmosphere}\att{Trans-Neptunian}\att{One Moon}
            \obj{xx..}{Ceres}
            \obj{x.xx}{Makemake}
            \obj{.xxx}{Eris}
            \obj{x.x.}{Heumea}
            \obj{.xx.}{Pluto}
        \end{cxt}
    \end{subfigure}%
    \begin{subfigure}[b]{0.375\textwidth}
        \centering
        \begin{tikzpicture}[scale=.3]
            \begin{diagram}\fcaCircleSize{.2}
                \Node (0)(1.0909,12.0)
                \Node (1)(1.0909,7.6364)
                \Node (2)(1.0909,5.4545)
                \Node (3)(4.3636,7.6363)
                \Node (4)(3.2727,5.4545)
                \Node (5)(2.1818,2.1818)
                \Node (6)(-3.2727,6.5454)
                \Node (7)(-4.3636,4.3636)
                \Node (8)(-3.2727,2.1818)
                \Node (9)(0.0,2.1818)
                \Node (10)(-1.0909,0.0)
                \Edge(1)(0)
                \Edge(2)(1)
                \Edge(3)(0)
                \Edge(4)(3)
                \Edge(4)(1)
                \Edge(5)(2)
                \Edge(5)(4)
                \Edge(6)(0)
                \Edge(7)(6)
                \Edge(7)(1)
                \Edge(8)(2)
                \Edge(8)(7)
                \Edge(9)(6)
                \Edge(9)(3)
                \Edge(10)(9)
                \Edge(10)(8)
                \Edge(10)(5)
                \node at (1.75, 0.25){\scriptsize Ceres};
                \put(0.4, 0.25){\line(-0.5, 0.5){0.275}}
                \leftObjbox(8){\scriptsize Makemake}
                \rightObjbox(5){\scriptsize Eris}
                \leftObjbox(7){\scriptsize Heumea}
                \rightObjbox(4){\scriptsize Pluto}
                \leftAttbox(6){\scriptsize Non-Spherical}
                \node at (5.75, 10.0){\scriptsize\itshape \textrm{Trans-Neptunian}};
                \put(1.3, 2.8){\line(-2, -1){0.8}}
                \node at (-4.5, 10.0){\scriptsize\itshape \textrm{One Moon}};
                \put(-0.9, 2.8){\line(2.25, -2.25){1.05}}
                \rightAttbox(3){\scriptsize \textrm{Atmosphere}}
            \end{diagram}
        \end{tikzpicture}
    \end{subfigure}%
    \caption{A small formal context of the five dwarf planets and a poorly designed diagram of its concept lattice.}
    \label{fig:bungled}
\end{figure}

In additive diagrams, the nodes cannot be positioned arbitrarily. Instead, a set of vectors is selected, and each node is assigned a position as a linear combination of these vectors according to specific rules. The motivation for this approach originally stemmed from the observation that such diagrams are often easy to read, especially when drawing distributive lattices. This, in turn, is due to the fact that additive diagrams often contain many parallelograms, which aid readability. 

We will now refine this approach and focus on concept lattices. Let $\GMI$ be a finite formal context and $\BVGMI$ the associated concept lattice for which a line diagram is to be drawn. First, we choose a \emph{set representation} of $\BVGMI$, by which we mean an order embedding in the power set of some set $S$, i.e., a mapping 
$$
\REP \colon \BVGMI \to (\PM(S),\subseteq)\enspace.
$$
Such set representations are easy to find, and it is  well known what the smallest possible size of the set $S$ is, namely the 2-dimension of the embedded ordered set~\cite[Proposition 120]{GanterFCA2024}. 

There are several obvious set representations for a concept lattice $\BVGMI$. One is to choose $S := G$ and $\REP_o(A,B) := A$, another takes $S := M$ and $\REP_a(A,B) := M \setminus B$. We then refer to \emph{object-additive} or \emph{attribute-additive} diagrams, usually under the assumption that $\GMI$ is reduced. Combining these two representations (conveniently assuming $G \cap M = \emptyset$) and choosing $S := G \DC M$, we obtain the \emph{doubly-additive} representation 
$$
\REP(A,B):= A \cup (M \setminus B)\enspace.
$$
When we refer to \emph{additive} diagrams in the following, we always mean doubly-additive diagrams; the cases of attribute-additive and object-additive diagrams can be understood as special cases of this.

In the next step, a vector $\VEC(s) \in \R^2$ is selected for each element $s \in S$, and the concept nodes of the diagram are placed according to the rule
$$
\POS(A,B) := \sum_{s \in \REP(A,B)} \VEC(s)\enspace.
$$
This assigns a place in the plane to each formal concept, and because it is clear which nodes are to be connected by edges, a graph diagram of the concept lattice is determined. For a given context $\K$ and a given set representation $\REP\uB(\K)\to  (\PM(S),\subseteq)$, 
$$
\Add_S(\K) := \left\{\POS\in (\R^2)^{\uB(\K)} \quad\mid\quad \exists \VEC\in (\R^2)^S \colon \forall (A,B)\in\uB(\K) \colon \POS(A,B) = \textstyle\sum_{s \in \REP(A,B)} \VEC(s)\right\}
$$
is the set of all line diagrams that can be created for $L$ based on the set representation. (We just write $\Add(\K)$ if $S$ is clear from the context.)

However, this does not guarantee that a valid order diagram has actually been created. To do this, it must be ensured that $\POS$ is in 
$$
\AddLD(\K) := \Add(\K) \cap \LD(\K)\enspace.
$$
A simple sufficient condition for this is that the vectors $\VEC(s)$, $s \in S$, all have a positive $y$-component.

\bigbreak\par\noindent
\textbf{Remark 1: } Some authors prefer to draw diagrams ``from top to bottom'', i.e., starting with the largest concept $(G, G')$.  To do this, they often use the \emph{dual set representation} 
$$
\REPD(A, B) := B
$$
for which 
$$
(A_1, B_1) \le (A_2, B_2) \iff \REPD(A_1, B_1) \supseteq \REPD(A_2, B_2)\enspace.
$$
Vectors with a \emph{negative} $y$-component are usually used to ensure the order diagram properties. This approach can be easily translated into the one described above. If we set $\VEC(M) := \sum_{m \in M} \VEC(m)$, we obtain the following for an additive placement:
$$
\begin{minipage}{.65\textwidth}
  \begin{eqnarray*}
    \POS(A, B) &=& \sum_{s \in \REP_a(A, B)} \VEC(s) = \sum_{s \in M \setminus B}\VEC(m) \\
    &=& \sum_{s \in M} \VEC(s) - \sum_{s \in B} \VEC(s) = \VEC(M) - \sum_{s \in B} \VEC(s) \\
    &=& \VEC(M) - \sum_{s \in \REPD(A, B)} \VEC(s)\enspace.
  \end{eqnarray*}
\end{minipage}
$$
The difference is therefore merely a change in sign and a shift that moves the largest association element to the origin.

\section{Projections into the space of additive diagrams}
\label{sec:additive_projection}
$\Add(\K)$ is a subspace of $(\R^2)^{\uB(\K)}$, as is easy to understand with the help of the \emph{set representation matrix} (SRM), which is defined as follows: SRM is a $0,1$-matrix which has as many rows as there are formal concepts in $\B(\K)$ and as many columns as there are elements in $S$, and the entry in row $(A, B) \in \B(\K)$ and column $s \in S$ if $1$ iff $s \in \REP(A, B)$ and $0$ otherwise. Figure~\ref{fig:srm} shows the SRM for the context in Figure~\ref{fig:bungled}.

\begin{figure}
  \centering
  \small
    $$
    \left(
    \begin{array}[c]{rrrrrrrrr}
        1& 1& 1& 1& 1& 1& 1& 1& 1 \\
        0& 1& 1& 1& 1& 1& 1& 0& 1 \\
        0& 1& 1& 0& 0& 1& 1& 0& 0 \\
        1& 0& 1& 0& 1& 1& 0& 1& 1 \\
        0& 0& 1& 0& 1& 1& 0& 0& 1 \\
        0& 0& 1& 0& 0& 1& 0& 0& 0 \\
        1& 1& 0& 1& 0& 0& 1& 1& 1 \\
        0& 1& 0& 1& 0& 0& 1& 0& 1 \\
        0& 1& 0& 0& 0& 0& 1& 0& 0 \\
        1& 0& 0& 0& 0& 0& 0& 1& 1 \\
        0& 0& 0& 0& 0& 0& 0& 0& 0 \\
    \end{array}
    \right)
    $$
  \caption{The set representation matrix SRM for Figure~\ref{fig:bungled} (doubly-additive). The rows are labeled by the eleven formal concepts (in lectic order of their intents) and the columns are labeled by the objects and attributes.}
  \label{fig:srm}
\end{figure}

\subsection{Additivity-preserving transformations}
\label{subsec:additivity_preserving_transformations}
To obtain a placement, additionally a mapping $\VEC\colon S\to \R^2$ as described in Section~\ref{sec:additive_diagrams} must be selected. We can write the mapping in the form of a matrix with $|S|$ rows and two columns, where the $s$-th row contains the two components $\VEC(s)_x$ and $\VEC(s)_y$ of the vector $\VEC(s)$. If SRM is now multiplied by this $|S| \times 2$ matrix from the right, the result is a placement according to the rule specified above. This means, for each concept $(A,B)$, that it will be positioned at 
$$
\POS(A, B) = \sum_{s \in \REP(A, B)} \VEC(s)\enspace.
$$
The general additive placements for the selected set representation thus form the image space of this matrix. More precisely, the following applies: $(x_1, y_1), (x_2, y_2), \ldots$ is an additive placement for $\BVGMI$ with the selected set representation iff both column vectors $(x_1, x_2, \ldots )^T$ and $(y_1, y_2, \ldots)^T$ are linear combinations of the columns of SRM. Each additive placement, therefore, consists of two vectors from the image space of the SRM matrix, and conversely, two vectors from this space result in an additive placement. In this sense, additive placements are (arbitrary) pairs of elements of $\textrm{im(SRM)}$, the image space of SRM. Thus $\Add_S(\K) = \textrm{im(SRM)}\times \textrm{im(SRM)}$.

Let us now consider again line diagrams that respect the constraint $p\prec q\Rightarrow \POS(p)_y<\POS(q)_y$. In an additive line diagram this means that if $(A_1, B_1)$ is a lower cover of $(A_2, B_2)$, then $\POS(A_1, B_1)$ must have a smaller $y$-coordinate than $\POS(A_2, B_2)$. This is expressed by the strict linear inequality
$$
\sum_{s \in \REP(A_2, B_2) \setminus \REP(A_1, B_1)} \VEC(s)_y > 0\enspace,
$$
which defines an open half-space of the vector space. The vectors that simultaneously satisfy all these conditions therefore form an open convex cone in the vector space generated by the columns of the SRM matrix. Incidentally, this also applies if one replaces zero with a minimum value in the formula (which might be handy in practice to assure a certain minimal slope of all edges); this only changes the size of the cone.

Even though it may seem quite theoretical at first glance, this algebraic perspective is definitely of practical value. We are dealing with a subspace of a Euclidean vector space and can therefore use methods such as orthogonal projection. The orthogonal projection assigns to each (non-additive) diagram the nearest additive one. To do this, we use the Gram-Schmidt orthogonalization method, for example, to calculate an orthonormal basis for the column space of the SRM, which we can then use to calculate the projection of any placement into the space of additive placements. Program libraries such as \texttt{NumPy} contain all the necessary routines. 

\subsection{First applications}
\label{subsec:applications}
We briefly outline three possible applications, each for a fixed set representation. As an illustration, we use again the concept lattice of Figure~\ref{fig:bungled}, which has already served for this purpose in~\cite{GanterConflict2004}. 

\subsubsection{Is my diagram additive?}
\label{subsubsec:additive?}
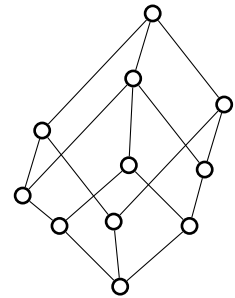
\begin{wrapfigure}{R}{4cm}
    \vspace{-12pt}
    \centering
    \begin{tikzpicture}[scale=.3]
        \begin{diagram}\fcaCircleSize{.2}
            \Node (0)(1.2857,12.0)
            \Node (1)(0.4286,9.1429)
            \Node (2)(0.2381,5.3333)
            \Node (3)(4.4286,8.0000)
            \Node (4)(3.5714,5.1429)
            \Node (5)(2.9048,2.6667)
            \Node (6)(-3.5714,6.8571)
            \Node (7)(-4.4286,4.0000)
            \Node (8)(-2.8095,2.6667)
            \Node (9)(-0.4286,2.8571)
            \Node (10)(-0.1429,0.0)
            \Edge(1)(0)
            \Edge(2)(1)
            \Edge(3)(0)
            \Edge(4)(3)
            \Edge(4)(1)
            \Edge(5)(2)
            \Edge(5)(4)
            \Edge(6)(0)
            \Edge(7)(6)
            \Edge(7)(1)
            \Edge(8)(2)
            \Edge(8)(7)
            \Edge(9)(6)
            \Edge(9)(3)
            \Edge(10)(9)
            \Edge(10)(8)
            \Edge(10)(5)
        \end{diagram}
    \end{tikzpicture}
    \caption{\sffamily\small Additive diagram that has been obtained as an orthogonal projection from the non-additive diagram in Figure~\ref{fig:bungled}.}
    \label{fig-additive-dwarf-planets}
\end{wrapfigure}

Given an arbitrary line diagram, one can determine its additivity by calculating its orthogonal projection into the additive vector space. If it is identical to the original diagram, then it is additive; otherwise, it is not. Using standard linear algebra methods, such as a pseudo-inverse for SRM, vectors can be calculated in the additive case.

The diagram in Figure~\ref{fig-additive-dwarf-planets} is the projection of the diagram from Figure~\ref{fig:bungled}. It is obviously different, so the first diagram was not (doubly) additive.

\subsubsection{Drag and drop}
\label{subsubsec:dragging}
Drag and drop is a convenient method for interacting with line diagrams. But when we manually modify a given additive diagram, we might destroy its additivity. However, this damage can be repaired by applying the orthogonal projection after each step.  There is no guarantee that the projection will respect the order relation. However, if the change is small, there is a good chance that it will not leave the convex cone and thus will project back onto a valid diagram.
    
This can be used, for example, to allow the user of an interactive screen display to move a concept node with the mouse. After each small movement, the projection is calculated, and the diagram is redrawn. Moving one node can thus change the entire diagram. If the user leaves the cone of valid diagrams, the movement is not executed.

The effect is demonstrated in Figure~\ref{fig:dragging}, where the middle atom (the node marked by an outgoing arrow in the leftmost diagram) is moved horizontally to the right in three equal steps. After each such step, the additive approximation is depicted, to which the positions of the nodes in the leftmost diagram are added in gray. Although only the $x$-coordinate of a node is changed, additivity forces other nodes to move as well.

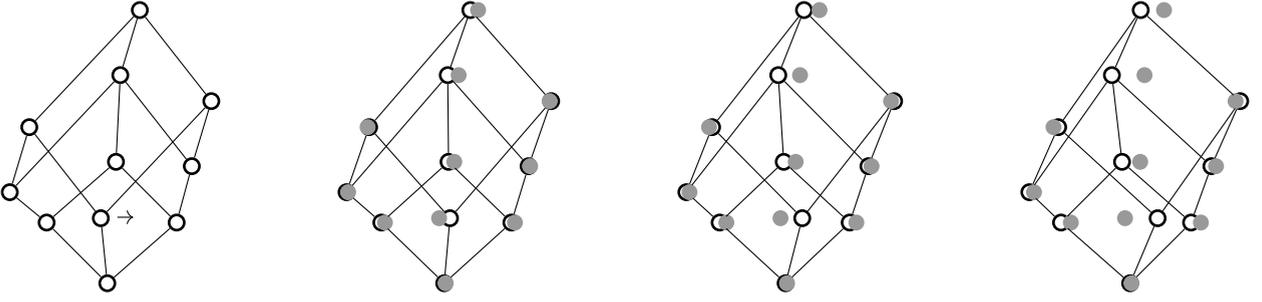
\begin{figure}
  \centering
  \begin{tikzpicture}[scale=.3]
    \begin{diagram}
      \fcaCircleSize{.2}
      \Node (0)(1.2857,12.0)
      \Node (1)(0.4286,9.1429)
      \Node (2)(0.2381,5.3333)
      \Node (3)(4.4286,8.0000)
      \Node (4)(3.5714,5.1429)
      \Node (5)(2.9048,2.6667)
      \Node (6)(-3.5714,6.8571)
      \Node (7)(-4.4286,4.0000)
      \Node (8)(-2.8095,2.6667)
      \Node (9)(-0.4286,2.8571)
      \Node (10)(-0.1429,0.0)
      \draw[->] (.3,2.9) -- (1,2.9);
      \Edge(1)(0)
      \Edge(2)(1)
      \Edge(3)(0)
      \Edge(4)(3)
      \Edge(4)(1)
      \Edge(5)(2)
      \Edge(5)(4)
      \Edge(6)(0)
      \Edge(7)(6)
      \Edge(7)(1)
      \Edge(8)(2)
      \Edge(8)(7)
      \Edge(9)(6)
      \Edge(9)(3)
      \Edge(10)(9)
      \Edge(10)(8)
      \Edge(10)(5)
    \end{diagram}
  \end{tikzpicture}
  \hfill
  \begin{tikzpicture}[scale=.3]
    \begin{diagram}
      \fcaCircleSize{.2}
      \Node (0)(0.9444642857142881,12.0)
      \Node (1)(-0.0491785714285704,9.142857142857148)
      \Node (2)(-0.014273809523809425,5.333333333333334)
      \Node (3)(4.496821428571432,8.000000000000005)
      \Node (4)(3.503178571428575,5.142857142857153)
      \Node (5)(2.7476309523809515,2.6666666666666643)
      \Node (6)(-3.503178571428573,6.857142857142858)
      \Node (7)(-4.496821428571432,4.000000000000005)
      \Node (8)(-2.96665476190476,2.66666666666667)
      \Node (9)(0.04917857142857179,2.857142857142862)
      \Node (10)(-0.2047499999999991,0.0)
      \fill[gray!80] (1.2857,12.0) circle(.35);
      \fill[gray!80] (0.4286,9.1429) circle(.35);
      \fill[gray!80] (0.2381,5.3333) circle(.35);
      \fill[gray!80] (4.4286,8.0000) circle(.35);
      \fill[gray!80] (3.5714,5.1429) circle(.35);
      \fill[gray!80] (2.9048,2.6667) circle(.35);
      \fill[gray!80] (-3.5714,6.8571) circle(.35);
      \fill[gray!80] (-4.4286,4.0000) circle(.35);
      \fill[gray!80] (-2.8095,2.6667) circle(.35);
      \fill[gray!80] (-0.4286,2.8571) circle(.35);
      \fill[gray!80] (-0.1429,0.0) circle(.35);
      \Edge(1)(0)
      \Edge(2)(1)
      \Edge(3)(0)
      \Edge(4)(3)
      \Edge(4)(1)
      \Edge(5)(2)
      \Edge(5)(4)
      \Edge(6)(0)
      \Edge(7)(6)
      \Edge(7)(1)
      \Edge(8)(2)
      \Edge(8)(7)
      \Edge(9)(6)
      \Edge(9)(3)
      \Edge(10)(9)
      \Edge(10)(8)
      \Edge(10)(5)
    \end{diagram}
  \end{tikzpicture}
  \hfill
  \begin{tikzpicture}[scale=.3]
    \begin{diagram}
      \fcaCircleSize{.2}
      \Node (0)(0.6032142857142877,12.0)
      \Node (1)(-0.5269285714285713,9.142857142857151)
      \Node (2)(-0.2872738095238099,5.333333333333335)
      \Node (3)(4.565071428571434,8.000000000000007)
      \Node (4)(3.4349285714285744,5.1428571428571574)
      \Node (5)(2.6111309523809494,2.666666666666664)
      \Node (6)(-3.4349285714285758,6.857142857142857)
      \Node (7)(-4.565071428571434,4.000000000000008)
      \Node (8)(-3.10315476190476,2.6666666666666714)
      \Node (9)(0.5269285714285701,2.8571428571428648)
      \Node (10)(-0.20475000000000076,0.0)
      \fill[gray!80] (1.2857,12.0) circle(.35);
      \fill[gray!80] (0.4286,9.1429) circle(.35);
      \fill[gray!80] (0.2381,5.3333) circle(.35);
      \fill[gray!80] (4.4286,8.0000) circle(.35);
      \fill[gray!80] (3.5714,5.1429) circle(.35);
      \fill[gray!80] (2.9048,2.6667) circle(.35);
      \fill[gray!80] (-3.5714,6.8571) circle(.35);
      \fill[gray!80] (-4.4286,4.0000) circle(.35);
      \fill[gray!80] (-2.8095,2.6667) circle(.35);
      \fill[gray!80] (-0.4286,2.8571) circle(.35);
      \fill[gray!80] (-0.1429,0.0) circle(.35);
      \Edge(1)(0)
      \Edge(2)(1)
      \Edge(3)(0)
      \Edge(4)(3)
      \Edge(4)(1)
      \Edge(5)(2)
      \Edge(5)(4)
      \Edge(6)(0)
      \Edge(7)(6)
      \Edge(7)(1)
      \Edge(8)(2)
      \Edge(8)(7)
      \Edge(9)(6)
      \Edge(9)(3)
      \Edge(10)(9)
      \Edge(10)(8)
      \Edge(10)(5)
    \end{diagram}
  \end{tikzpicture}
  \hfill
  \begin{tikzpicture}[scale=.3]
    \begin{diagram}
      \fcaCircleSize{.2}
      \Node (0)(0.26196428571428865,12.0)
      \Node (1)(-1.0046785714285713,9.142857142857155)
      \Node (2)(-0.5602738095238101,5.333333333333337)
      \Node (3)(4.633321428571436,8.000000000000007)
      \Node (4)(3.366678571428575,5.142857142857162)
      \Node (5)(2.4746309523809487,2.666666666666662)
      \Node (6)(-3.3666785714285763,6.857142857142858)
      \Node (7)(-4.633321428571436,4.0000000000000115)
      \Node (8)(-3.239654761904759,2.6666666666666745)
      \Node (9)(1.0046785714285704,2.857142857142866)
      \Node (10)(-0.20474999999999977,0.0)
      \fill[gray!80] (1.2857,12.0) circle(.35);
      \fill[gray!80] (0.4286,9.1429) circle(.35);
      \fill[gray!80] (0.2381,5.3333) circle(.35);
      \fill[gray!80] (4.4286,8.0000) circle(.35);
      \fill[gray!80] (3.5714,5.1429) circle(.35);
      \fill[gray!80] (2.9048,2.6667) circle(.35);
      \fill[gray!80] (-3.5714,6.8571) circle(.35);
      \fill[gray!80] (-4.4286,4.0000) circle(.35);
      \fill[gray!80] (-2.8095,2.6667) circle(.35);
      \fill[gray!80] (-0.4286,2.8571) circle(.35);
      \fill[gray!80] (-0.1429,0.0) circle(.35);
      \Edge(1)(0)
      \Edge(2)(1)
      \Edge(3)(0)
      \Edge(4)(3)
      \Edge(4)(1)
      \Edge(5)(2)
      \Edge(5)(4)
      \Edge(6)(0)
      \Edge(7)(6)
      \Edge(7)(1)
      \Edge(8)(2)
      \Edge(8)(7)
      \Edge(9)(6)
      \Edge(9)(3)
      \Edge(10)(9)
      \Edge(10)(8)
      \Edge(10)(5)
    \end{diagram}
  \end{tikzpicture}
  \caption{The middle atom is dragged to the right.}
  \label{fig:dragging}
\end{figure}

\subsubsection{Snap to grid}
\label{subsubsec:snap}
Several programs for editing line diagrams allow one to display a grid, similar to graph paper, and then offer the option of moving each concept node to the nearest grid point with a single click. Unfortunately, the result is often disappointing because regularities already found in the diagram can be destroyed.

With additive diagrams, this can be approached systematically in a different way: Instead of moving all concept nodes to the nearest grid point, we just force all vectors $\VEC(s)$ with $s\in S$ to the next grid point. Because the SRM matrix has integer entries only, this forces also all concept nodes to grid points, but this time by retaining the additivity.
\section{Force-directed algorithm for attribute-additive line diagrams}
\label{sec:fdp_m}
Force-directed placement (FDP) for generating graph drawings operates on the principle of modeling a graph as a physical system, in which vertices and, by choice, also edges, are treated as physical elements. By finding a balanced state of attractive and repulsive forces that work on them, the distribution of vertices reflects the structural properties and constraints defined by the underlying physical model~\cite{eades84, Fruchterman1991, Kamada1989}.

C. Zschalig applied this approach to concept lattices by optimizing attribute-additive line diagrams (using the dual set representation) through a three-part process~\cite{ZschaligFDP}. First, he initializes the attribute vectors by exploiting the structural properties of the formal context. Second, he defines a physical model that balances repulsive and attractive forces to maximize the \emph{conflict distance} between edges and non-incident nodes, while a gravitational force ensures an upward-directed diagram that preserves the underlying order relation. Finally, he employs a conjugate gradient optimizer to iteratively refine the forces applied to the attribute vectors until a balanced state is achieved.
In this section, we describe Zschalig's approach in more detail. 

To improve the clarity of the following expressions, we introduce a compact notation for frequently used terms. For any element $e \in G \DC M$, we denote its associated vector as $n = \VEC(e)$. Given the graph $(V, E) = (\BGMI, \prec)$ of a concept lattice, we represent the position of a concept $v \in V$ as $w = \POS (v)$, and the position of an edge $v_1 v_2 \in E$ as $f = \POS_E (v_1 v_2)$. Furthermore, we simplify the notation of the extent $A_v = \operatorname{Ext}(v)$ and the intent $B_v = \operatorname{Int}(v)$. Finally, we introduce a mode operator $\MOD(e)$ to keep our notation compact:
$$
\MOD(e) = 
\begin{cases}
	+1 &, \text{ if } e \in G \\
	-1 &, \text{ if } e \in M\enspace.
\end{cases}
$$
By acting as an indicator for whether an element $e$ is an object or an attribute, this operator prevents redundant dual-case distinctions.

\subsection{Initialization}
\label{subsec:fdp_m_initialization}
The initialization focuses on a layout with minimal edge crossings, as these are one of the main factors for well readable drawings~\cite{Purchase2000, Purchase1997}. To address this, Zschalig's \emph{first planarity condition} (FPC) requires that for $m_i, m_j, m_k \in M$, if $\mu m_k$ lies between $\mu m_i$ and $\mu m_j$ in a linear order, then $\mu m_k > (\mu m_i \wedge \mu m_j)$ must hold to prevent chains from $\mu m_k$ to the bottom concept $c_\bot$ from intersecting with other edges~\cite{Zschalig2007, Zschalig2006, Zschalig2005}. This structural relationship is quantified by the \emph{Sup-Inf distance}, which measures the cardinality of set differences between infimum and supremum:
$$ 
d_{SI}(m_i, m_j) :=
\begin{cases} 
    | (B_{\mu m_i} \cup B_{\mu m_j})'' | - | B_{\mu m_i} \cap B_{\mu m_j} | - 1 &, \text{if } \mu m_i \parallel \mu m_j \\
    0 &, \text{otherwise}
\end{cases} 
$$
for all attributes $m_i, m_j \in M$.

The \emph{planarity enhancer} derives a linear order by treating the weighted graph $\Gamma_{SI} = (M, d_{SI})$ as a physical system. By minimizing the total energy $E_{SI}$ via the forces $F_{SI}$, a layout is reached that reflects the structural distances.
$$
\begin{aligned}
    E_{SI} &= \sum_{m_i, m_j \in M} (|n_i - n_j| - d_{SI}(m_i, m_j))^2 \\
    F_{SI}(n_i) &= -2 \sum_{m_j \in M} \frac{|n_i - n_j| - d_{SI}(m_i, m_j)}{|n_i - n_j|} \cdot (n_i - n_j)\enspace.
\end{aligned}
$$
The final ordering of all $\bigwedge$-irreducibles is extracted by projecting the nodes onto a one-dimensional axis spanned by the two most distant points in the graph. Following that linear order, the coatoms are placed along the parabola $y = -0.09 x^2 - 1.75$, with a consistent spacing of $1.8$ units. Accordingly, the coatoms are placed on the $x$-axis with an offset of $x \in \{ \pm 0.9, \pm 2.7, \ldots \}$ if the number of coatoms is even, or $x \in \{ 0, \pm 1.8, \pm 3.6, \ldots \}$ if it is odd. Finally, the positions for the remaining attributes are determined via \emph{chain decomposition}, where the vector of an attribute is the mean of all $\bigwedge$-irreducibles above it:
$$
n_i = \Delta_i + \frac{1}{|m_j > m_i|} \sum_{m_j > m_i} n_j\enspace,
$$
with $\Delta_i$ representing a small shift, which is necessary if two attributes share the same set of upper neighbors~\cite{ZschaligFDP}.

\subsection{Forces}
\label{subsec:fdp_m_forces}
Since Zschalig's physical model is defined for attribute-additive line diagrams, the forces are not applied to the nodes themselves, but rather to the underlying attribute vectors. The final layout is determined by three forces, with a focus on maximizing the conflict distance 
$$
d(w, f) =
\begin{cases}
    |w_1 - w| ,& \text{if } (w_1 - w) \cdot (w_2 - w_1) > 0 \\
    |w_2 - w| ,& \text{if } (w_2 - w) \cdot (w_2 - w_1) < 0 \\
    \frac{|(w_1 - w) \times (w_2 - w)|}{|w_2 - w_1|} ,& \text{otherwise\enspace,}
\end{cases}
$$
which measures the distance between a concept node $v$ and a non-incident edge $v_1 v_2$. Accordingly, the main force is a \emph{repulsive force}, pushing nodes away from non-incident edges:
$$
F_{rep} = -\nabla E_{rep}, \quad E_{rep} = \sum_{v \in V} \sum_{v_1 v_2 \in \prec, v \not\in v_1 v_2} \frac{1}{d(w, f)}.
$$
To prevent the concept lattice from becoming visually fragmented, a counteracting \emph{attractive force} is applied, which acts like a spring between the concept nodes, pulling them together:
$$
F_{att} = -\nabla E_{att}, \quad E_{att} = \sum_{v_1 v_2 \in \prec} |f|^2\enspace.
$$
Standard FDP models typically operate on undirected graphs, resulting in layouts that lack an orientation. Accordingly, to align with drawing conventions for lattices, a third gravitational force is applied to ensure an upward-directed diagram, pulling the attribute vectors into a safe zone, which is defined by the number of attributes:
$$
F_{grav}(n) = -\frac{\mathrm{d}E_{grav}}{\mathrm{d}n}, \quad \frac{\mathrm{d}E_{grav}(n)}{\mathrm{d}\varphi(n)} = \frac{\sin^2 (\varphi(n)) - \sin^2 (\varphi_0)}{\sin^2 (\varphi(n))} \cdot
\begin{cases}
    \phantom{-}1, \varphi(n) \in [0, \varphi_0] \\
    \phantom{-}0, \varphi(n) \in [\varphi_0, \pi - \varphi_0] \\
    -1, \varphi(n) \in [\pi - \varphi_0, \pi]\enspace,
\end{cases} 
$$
where $\varphi_0 := \frac{\pi}{|M| + 1}$~\cite{ZschaligFDP}.

For the final optimization, Zschalig employs a conjugate gradient approach~\cite{Hestenes1952} to minimize the total energy of the physical system. By stepwise pulling attribute vectors in the direction of the forces, the concept positions are iteratively refined until a local optimum is reached.

\section{Force-directed algorithm for doubly-additive line diagrams}
\label{sec:fdp_doubly}
To adapt Zschalig's force-directed placement approach~\cite{ZschaligFDP} for doubly-additive line diagrams, the Sup-Inf distance must be extended to measure the structural difference for all pairwise combinations of objects and attributes. Furthermore, this representation significantly increases the degree of freedom by working with upward-directed object vectors and downward-directed attribute vectors. Since the underlying physics of the system remains the same, we continue to use the conjugate gradient approach from the original work to iteratively refine the element vectors.

\subsection{Initialization}
\label{subsec:fdp_doubly_initialization}
The original Sup-Inf distance builds on the first planarity condition for the $\bigwedge$-irreducibles. This condition requires that for $m_i, m_j, m_k \in M$, if $\mu m_k$ is situated between $\mu m_i$ and $\mu m_j$ in a linear order, the element $\mu m_k$ must be strictly above their infimum $\mu m_k > (\mu m_i \wedge \mu m_j)$. Otherwise, any chain from $\mu m_k$ to the bottom concept $c_\bot$ would cross either $\mu m_i \mu m_j$ or $\mu m_j \mu m_i$~\cite{Zschalig2007, Zschalig2006, Zschalig2005}. Dually, an \emph{inversed planarity condition} holds for the $\bigvee$-irreducibles. This requires for $g_i, g_j, g_k \in G$, if $\gamma g_k$ is situated between $\gamma g_i, \gamma g_j$, that $\gamma g_k$ must lie strictly below the supremum of the outer pair $\gamma g_k < (\gamma g_i \vee \gamma g_j)$. This prevents any chain from $\gamma g_k$ to the top concept $c_\top$ from crossing the edges $\gamma g_i \gamma g_j$ or $\gamma g_j \gamma g_i$.

Following the attribute-additive approach~\cite{ZschaligFDP}, we generalize the Sup-Inf distance to the doubly-additive representation. To simplify notation, we define a mapping for each element $e_i \in G \DC M$ to its corresponding object or attribute concept $\CC_i \in \BGMI$ as follows:
$$
\CC_i := (A_i, B_i) :=
\begin{cases}
\gamma e_i &, \text{if } e_i \in G \\
\mu e_i &, \text{if } e_i \in M\enspace.
\end{cases}
$$
Using this unified mapping, we extend the original distance measure to any incomparable pair of concepts $\CC_i \parallel \CC_j$ for all elements $e_i, e_j \in G \DC M$. The doubly-additive Sup-Inf distance is defined as:
$$
d_{SI}(e_i, e_j) :=
\begin{cases}
  |(A_i \cup A_j)''| - |A_i \cap A_j| - 1 &, \text{if } \CC_i \parallel \CC_j \text{ and } e_i, e_j \in G \\
  |(B_i \cup B_j)''| - |B_i \cap B_j| - 1 &, \text{if } \CC_i \parallel \CC_j \text{ and } e_i, e_j \in M \\
  \Delta_{\wedge} - \Delta_{\vee} - 1 &, \text{if } \CC_i \parallel \CC_j \text{ and } e_i \in G, e_j \in M \text{ (or vice versa)} \\
  0 &, \text{otherwise\enspace,}
\end{cases}
$$
where the discrepancies $\Delta_{\wedge}$ and $\Delta_{\vee}$ account for the structural differences between extents and intents:
$$
\Delta_{\wedge} = |A_i \cap A_j| - |B_i \cap B_j| \text{ and } \Delta_{\vee} = |(A_i \cup A_j)''| - |(B_i \cup B_j)''|\enspace.
$$
Further extending the planarity enhancer of the original attribute-additive approach, we derive a weighted graph $\Gamma_{SI} = (G \DC M, d_{SI})$. In this, each element $e_i \in G \DC M$ is initially assigned a position on the unit circle:
$$
n_i = (\cos(\varphi(n_i)), \sin (\varphi(n_i))), \text{ where } \varphi(n_i) = \frac{2 \pi i}{|G \DC M|}\enspace.
$$
The graph layout is optimized by pushing each element in the direction of the force $F_{SI}$. In the final balanced state, which minimizes the total energy $E_{SI}$, the Euclidean distances of edges $n_i n_j$ approximate the weights defined by $d_{SI}(e_i, e_j)$:
$$
\begin{aligned}
  E_{SI} &= \sum_{e_i, e_j \in G \DC M} (|n_i - n_j| - d_{SI}(e_i, e_j))^2 \\
  F_{SI}(n_i) &= -2 \sum_{e_j \in G \DC M} \frac{|n_i - n_j| - d_{SI}(e_i, e_j)}{|n_i - n_j|} \cdot (n_i - n_j)\enspace.
\end{aligned}
$$
For the final initialization of the element vectors, we employ the original parabola $y = 0.09 x^2 + 1.75$, which was derived from the positions of $\bigwedge$-irreducibles in planar diagrams of the Boolean lattices $B_4$ and $B_5$~\cite{ZschaligFDP}. By applying the dual transformation $(x, y) \mapsto (-x, -y)$, we obtain the mirrored parabola $y = -0.09 x^2 - 1.75$. Accordingly, this results in a symmetric combination of an upward-directed parabola for the atoms and a downward-directed parabola for the coatoms. The elements are placed with a consistent spacing of $1.8$ units, with respect to the linear order derived from projecting the vertices of the weighted graph $\Gamma_{SI}$ onto a one-dimensional axis spanned by the graph's two most distant points.

The positions for the remaining elements are determined via chain decomposition. The vector of an attribute is defined as the mean of all attribute vectors above it, while the vector of an object is the mean of all object vectors below it:
$$
n_i = \Delta_i + 
\begin{cases}
  \frac{1}{|g_j \leq e_i|} \sum\limits_{g_j \leq e_i} n_j & \text{if } e_i \in G \\
  \frac{1}{|m_j > e_i|} \sum\limits_{m_j > e_i} n_j & \text{if } e_i \in M\enspace,
\end{cases}
$$
where $\Delta_i$ represents a small shift to prevent overlapping concepts in the initial layout, which is necessary whenever two attributes share the same set of upper neighbors or two objects share the same set of lower neighbors.

\subsection{Forces}
\label{subsec:fdp_doubly_forces}
The foundations of this physical model were established by C. Zschalig for attribute-additive line diagrams~\cite{ZschaligFDP}, which represent a special case of the doubly-additive representation. This section combines the original calculations with our necessary extensions to provide a comprehensive overview for the doubly-additive version. Central to the model is again the conflict distance, implemented as a repulsive force balanced by a counteracting attractive force to prevent overly sparse layouts. Since these forces rely on concept nodes and edges, the fundamental energy computation remains unchanged. However, we extend the force calculations to account for the dual contribution of objects and attributes to concept positions. The gravitational force required a larger refactoring, accounting for the larger vector set pointing into two different directions.

\subsubsection{Repulsive Force}
\label{fdp_doubly_repulsive_force}
The repulsive energy includes all pairs of a concept $v$ and a non-incident edge $v_1 v_2$:
$$
E_{rep} = \sum_{v \in V} \sum_{v_1 v_2 \in \prec, v \not \in v_1 v_2} \frac{1}{d(w, f)}\enspace.
$$
To improve the readability of the line diagram, we aim to maximize the \emph{conflict distance} $d(w, f)$. The resulting force acting on an element vector
$$
F_{rep}(n) =
\sum_{v \in V} \sum_{v_1 v_2 \in \prec, v \not \in v_1 v_2} \frac{1}{d(w, f)^2} \cdot \frac{\mathrm{d}d(w, f)}{\mathrm{d}n}
$$
depends on the relative vertical position of the formal concept $w$, which can be below $w_1$, above $w_2$, or in between (see Figure~\ref{fig:scenarios}). The resulting force depends on the contribution of an element to the extents or intents of the concepts, as shown in Figure~\ref{fig:forces}. From these cases, we exclude the two outermost ones, because independent elements and elements that influence all three concepts, and thus shift the entire group, do not affect the conflict distance. Furthermore, four of these cases violate the order, as an object in the extent of $v_1$ must also be present in the extent of $v_2$; a property that dually applies to the attributes.

\begin{figure}
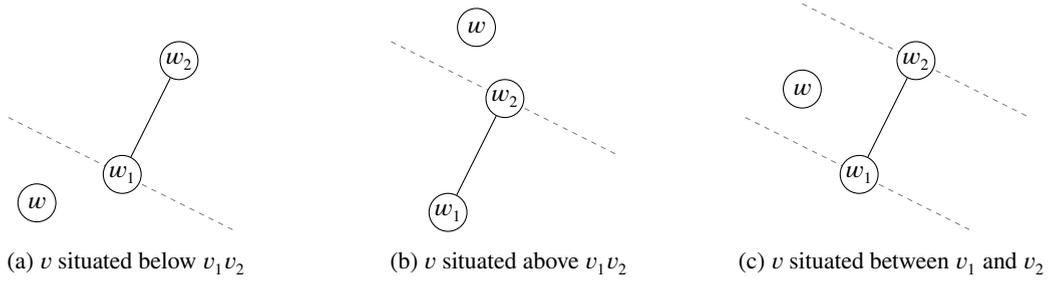

  \centering
    \begin{subfigure}[b]{0.3\textwidth}
      \centering
      \input{figs/fdp/scenario_1}
      \caption{$v$ situated below $v_1 v_2$}
    \end{subfigure}
    \begin{subfigure}[b]{0.3\textwidth}
      \centering
      \input{figs/fdp/scenario_2}
      \caption{$v$ situated above $v_1 v_2$}
    \end{subfigure}
    \begin{subfigure}[b]{0.3\textwidth}
      \centering
      \input{figs/fdp/scenario_3}
      \caption{$v$ situated between $v_1$ and $v_2$}
    \end{subfigure}
  \caption{Possible positions of a concept $v$ relative to a non-incident edge $v_1 v_2$}
  \label{fig:scenarios}
\end{figure}

\begin{figure}
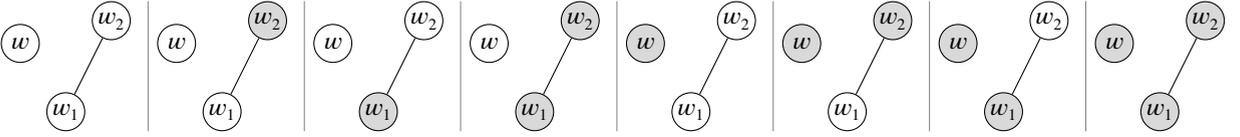

  \centering
  \begin{minipage}{0.125\textwidth}
    \centering
    \input{figs/fdp/force_1}
  \end{minipage}%
  {\color{gray}\vrule width 0.1pt}%
  \begin{minipage}{0.125\textwidth}
    \centering
    \input{figs/fdp/force_2}
  \end{minipage}%
  {\color{gray}\vrule width 0.1pt}%
  \begin{minipage}{0.125\textwidth}
    \centering
    \input{figs/fdp/force_3}
  \end{minipage}%
  {\color{gray}\vrule width 0.1pt}%
  \begin{minipage}{0.125\textwidth}
    \centering
    \input{figs/fdp/force_4}
  \end{minipage}%
  {\color{gray}\vrule width 0.1pt}%
  \begin{minipage}{0.125\textwidth}
    \centering
    \input{figs/fdp/force_5}
  \end{minipage}%
  {\color{gray}\vrule width 0.1pt}%
  \begin{minipage}{0.125\textwidth}
    \centering
    \input{figs/fdp/force_6}
  \end{minipage}%
  {\color{gray}\vrule width 0.1pt}%
  \begin{minipage}{0.125\textwidth}
    \centering
    \input{figs/fdp/force_7}
  \end{minipage}%
  {\color{gray}\vrule width 0.1pt}%
  \begin{minipage}{0.125\textwidth}
    \centering
    \input{figs/fdp/force_8}
  \end{minipage}%
  \caption{Schematic representation of element vector contributions. Gray nodes indicate element vectors contributing to a concept.}
  \label{fig:forces}
\end{figure}

\vspace{3em}
\noindent
(a) \; $(w_1 - w) \cdot (w_2 - w_1) > 0$: \\
For a concept $v$ situated below a non-incident edge $v_1 v_2$, the conflict distance is defined as the absolute difference between its position $w$ and that of the lower edge concept $w_1$:
$$
d(w, f) = |w_1 - w|.
$$
To derive the forces acting on the elements, we apply the unit vector between $w$ and $w_1$. The direction of the force is determined by the contribution of an element to the extents or intents of the concepts, yielding the following gradient:
$$
\frac{\mathrm{d} |w_1 - w|}{\mathrm{d} n} = 
\begin{cases}
  \frac{w_1 - w}{|w_1 - w|} &, \text{if } e \in A_{v_2} \setminus A_v \text{ or } e \in B_{v_1} \setminus B_v \\
  \frac{w - w_1}{|w_1 - w|} &, \text{if } e \in A_v \setminus A_{v_1} \text{ or } e \in B_v \setminus B_{v_1} \\
  0 &, \text{otherwise\enspace.}
\end{cases}
$$
The gradient is applied to the vector $n$ of element $e$, pushing the node away from the non-incident edge if the element influences the individual node but not the full triplet; conversely, it pushes the edge away from the node if the element contributes to an edge concept without influencing the triplet as a whole.

\bigskip
\noindent
(b) \; $(w_2 - w) \cdot (w_2 - w_1) < 0$: \\
Similarly, for a concept $v$ situated above a non-incident edge $v_1 v_2$, the conflict distance is defined as the absolute distance between its position $w$ and that of the upper edge concept $w_2$:
$$
d(w, f) = |w_2 - w|.
$$
Accordingly, we derive the forces using the unit vector between $w$ and $w_2$, where the direction of the force depends on the contribution of the element to the concept's extents or intents:
$$
\frac{\mathrm{d} |w_2 - w|}{\mathrm{d} n} = 
\begin{cases}
  \frac{w_2 - w}{|w_2 - w|} &, \text{if } e \in A_{v_2} \setminus A_v \text{ or } e \in B_{v_1} \setminus B_v \\
  \frac{w - w_2}{|w_2 - w|} &, \text{if } e \in A_v \setminus A_{v_2} \text{ or } e \in B_v \setminus B_{v_2} \\
  0 &, \text{otherwise\enspace.}
\end{cases}
$$
Similar to the dual case (a), the gradient pushes either the node or the non-incident edge apart based on the specific contribution of the element.

\bigskip
\noindent
(c) \; $(w_2 - w_1) \cdot (w - w_2) \leq 0$, $(w - w_1) \cdot (w- w_2) \geq 0$: \\
In the third case, where concept $v$ is situated between the edge concepts $v_1$ and $v_2$, the conflict distance is defined as the perpendicular distance from $w$ to the edge $f$:
$$
d(w, f) = \frac{A}{|f|} = \frac{|(w_1 - w) \times (w_2 - w)|}{|w_2 - w_1|}\enspace.
$$
Unlike the previous scenarios, this one requires a more sophisticated case distinction. The gradient strength is scaled based on the number of contributions of the element: it remains at full strength for elements that contribute to a single concept, but is scaled down when an element influences two concepts. The gradient is therefore derived based on the element's contribution to the concept's extent or intent:
$$
\frac{\mathrm{d} \frac{A}{|f|}}{\mathrm{d} e} =
\begin{cases}
  -\sqrt{\frac{(w_1 - w)^2 - |h|^2}{|f|^2}} \cdot \frac{n_+(f) \cdot l}{|f|} \cdot \MOD(e) &, \text{if } e \in A_{v_2} \setminus (A_v \cup A_{v_1}) \text{ or } e \in (B_v \cap B_{v_1}) \setminus B_{v_2} \\
  \sqrt{\frac{(w_2 - w)^2 - |h|^2}{|f|^2}} \cdot \frac{n_+(f) \cdot l}{|f|} \cdot \MOD(e) &, \text{if } e \in (A_v \cap A_{v_2}) \setminus A_{v_1} \text{ or } e \in B_{v_1} \setminus (B_v \cup B_{v_2}) \\
  -\frac{n_+(f) \cdot l}{|f|} &, \text{if } e \in A_{v_2} \setminus A_v \text{ or } e \in B_{v_1} \setminus B_v \\
  \frac{n_+(f) \cdot l}{|f|} &, \text{if } e \in A_v \setminus A_{v_2} \text{ or } e \in B_v \setminus B_{v_1}\enspace,
\end{cases}
$$
where 
$$
n_+(f) =
\begin{pmatrix} 
  -y(f) \\
  x(f)
\end{pmatrix}
$$
represents the edge vector $f$ rotated $\ang{90}$ in the positive direction to define the orthogonal vector. The factor $l$ acts as an orientation determined by the sign of the cross product $(w_1 - w) \times (w_2 - w)$, ensuring that the force always pushes the concept away from the non-incident edge. Accordingly, $l$ is set to $+1$ if $w$ lies to the left of the edge $f$ and $-1$ otherwise.

\subsubsection{Attractive Force}
\label{fdp_doubly_attractive_force}
For the attractive force, which pulls nodes together, we define the energy as the squared length of an edge:
$$
E_{att} = \sum_{v_1 v_2 \in \prec} |f|^2\enspace.
$$
To derive the acting forces, we simplify the calculation by excluding elements that contribute to both or neither of the concepts, as any adjustment to their positions would not improve the attractive energy. This leaves the objects present only in the extent of $v_2$ that pull the upper edge concept downward, and attributes present only in the intent of $v_1$ that pull the lower edge concept upward. Therefore, we derive the derivation as follows:
$$
F_{att}(n) = 2 \cdot \sum\limits_{\substack{v_1 v_2 \in \prec \\ e \in (A_2 \setminus A_1) \cup (B_1 \setminus B_2)}} (w_1 - w_2) \cdot \MOD(e)\enspace.
$$

\subsubsection{Gravitational Force}
\label{fdp_doubly_gravitational_force}
The gravitational force aims to keep the object vectors upward and attribute vectors downward-directed. Central to this model are the safe zones, where no forces act on the element vectors to prevent the model from becoming overly-constrained. To achieve a natural distribution of elements, the width of these zones is derived from the number of objects and attributes:
$$
\varphi_0(e) = 
\begin{cases}
  \frac{\pi}{|G| + 1} &, \text{if } e \in G \\
  \frac{\pi}{|M| + 1} &, \text{if } e \in M\enspace.
\end{cases}
$$
Based on these thresholds, we define the safe zone for objects in the upper semi-plane and dually for attributes in the lower semi-plane by the interval $[\varphi_0(e) \cdot \MOD(e), (\pi - \varphi_0(e)) \cdot \MOD(e)]$. 

The gravitational energy serves as a penalty for any element vector drifting outside its intended zone (see Figure~\ref{fig:grav_force}), thus approaching infinity as a vector approaches the horizontal axis. While the infinite forces for attribute vectors in the original approach enforced the final diagram to align with the conventions of the order relation, they introduce stability issues in doubly-additive line diagrams. If an object and an attribute vector simultaneously cross into the wrong semi-planes, their infinite forces cancel each other out, resulting in skewed layouts. Therefore, we define a linear penalty based on the vertical displacement, ensuring that the greater displacement dominates the resulting force. Accordingly, the resulting energy is defined as:
$$
E_{grav}(n) =
\begin{cases}
  \varphi(n) \cdot l + \cot(\varphi(n)) \sin^2 (\varphi_0(e)) + E(e) &, \text{if } \varphi_0(e) \cdot \MOD(e) \leq 0 \\
  y(n)^2 &, \text{if } \varphi_0(e) \cdot \MOD(e) > 0\enspace,
\end{cases}
$$
with a directional factor:
$$
l =
\begin{cases}
  +1 &, \text{if } 0 < \varphi(n) \cdot \MOD(e) < \varphi_o(e) \\
  -1 &, \text{if } \pi - \varphi_0(e) < \varphi(n) \cdot \MOD(e) < \pi \\
  \phantom{-}0 &, \text{ otherwise\enspace,}
\end{cases}
$$
which indicates whether a vector is too flat on the left ($-1$) or too flat on the right, ($+1$) and the integration constants:
$$
E(e) = 
\begin{cases}
-\varphi_0(e) \cdot \MOD(e) - \sin(\varphi_0(e)) \cos(\varphi_0(e)) \cdot \MOD(e) &, \text{if } l > 0 \\
-\varphi_0(e) \cdot \MOD(e) - \sin(\varphi_0(e)) \cos(\varphi_0(e)) \cdot \MOD(e) + \pi \cdot \MOD(e) &, \text{if } l < 0\enspace.
\end{cases}
$$
The resulting forces are derived as follows:
$$
F_{grav}(n) = 
\begin{cases}
  n_-(n) \cdot \frac{\sin^2 (\varphi(n)) - \sin^2 (\varphi_0(e))}{y(n)^2} \cdot l &, \text{if } \varphi_0(e) \cdot \MOD(e) \leq 0 \\
  -2 \cdot y(n) &, \text{if } \varphi_0(e) \cdot \MOD(e) > 0\enspace.
\end{cases}
$$

\begin{figure}
  \centering
  \input{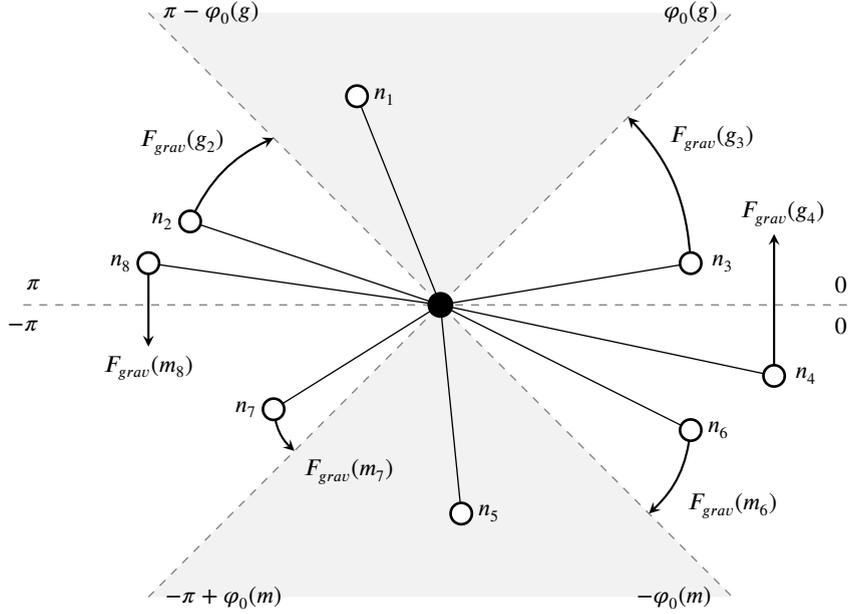}
  \caption{Visualization of the gravitational force acting on element vectors. Gray regions indicate the designated safe zones.}
  \label{fig:grav_force}
\end{figure}
\section{On the additivity of realizer-embedded diagrams}
\label{sec:additivity_dim_draw}
The planarity enhancer was developed to minimize the number of edge crossings in the initial layout, as a formal concept becomes locked within a cell due to the repulsive force acting more strongly if a concept approaches a non-incident edge. However, it becomes increasingly unstable as the number of concepts increases~\cite{ZschaligFDP}. Accordingly, since the performance of the FDP-optimizer heavily depends on the initial layout, we evaluate DimDraw as a structured alternative to compute the initial positions.

\subsection{Theoretical foundations}
\label{subsec:foundations_dim_draw}
Let $(P, \leq)$ be an ordered set. Two elements $p, q \in P$ are \emph{comparable} if either $p \leq q$ or $p > q$ holds; otherwise, they are \emph{incomparable}. The set of all incomparable pairs is denoed by $\INC(L)$. A total order $<_i$ is a \emph{linear extension} of $(P,\leq)$, if $p < q$ always implies $p <_i q$. A set $\RR =\{<_1, \dots, <_t\}$  of linear extensions is called a \emph{realizer} if its intersection restores the original relation, \ie, if $\bigcap_{i \in \{1, \dots, t\}} <_i = <$ holds. The \emph{order dimension} $\DIM(P)$ of the ordered set is defined as the cardinality of a minimal realizer~\cite{Dushnik1941, GanterFCA2024}.

For a two-dimensional concept lattice $\uB(\K)$ with a realizer $\RR = \{ <_1, <_2 \}$, DimDraw positions the node of a concept $c \in \BBK$ at $\POS(c) := (\POS(c)_{<_1}, \POS(c)_{<_2})$. This construction always ensures that the line diagram respects the order relation. Furthermore it guarantees that two nodes are never positioned at the same place. 
		
Because one always has $\POS(\bot)=(0,0)$ and $\POS(\top)=(|L|-1, |L|-1)$, these diagrams all seem squeezed and tilted to the right. For obtaining a more canonical diagram, one usually applies a linear coordinate transformation  with $(0, 1) \mapsto (-1, 1)$ and $(1, 0) \mapsto (1, 1)$, which turns the diagram by $45^{\circ}$ to the left, and then stretches the diagram horizontally by $\sqrt{2}$ and squeezes the diagram vertically by $\sqrt{\sfrac{1}{2}}$.\footnote{Figure~\ref{fig:not_attribute_additive} shows two rotated but non-stretched diagrams. The embedder-realized diagrams in the subsequent figures are all rotated and stretched. Rotating, stretching and squeezing are linear transformations and therefore do not change the status of a diagram of being additive or not.} This ensures that the top concept is positioned vertically above the bottom concept, provides a balanced width-to-height ratio, and positions all nodes on a grid. 

For $d$-dimensional lattices with $d \geq 3$, the same construction as above naturally yields a satisfactory $d$-dimensional line diagram. Experiments showed, however, that in general no satisfactory two-dimensional diagrams could be derived from them by searching for some optimal linear projection. Instead the lattice was first transformed into a two-dimensional ordered set by inserting a set of temporary relations $T \subseteq \INC(L)$ of minimal size. 

The augmented relation $\leq \! \cup \: T$ must remain an order on $L$ such that the resulting poset $P_T = (L, \leq \! \cup \: T)$ has a two-dimensional realizer. The concept positions are derived from the temporary realizer $\RR_T$ as described above, then the edges are drawn based on the original order relation $\leq$~\cite{Duerrschnabel2019, Duerrschnabel2023}.

We call these line diagrams \emph{realizer-embedded}, and denote the set of all realizer-embedded line diagrams of a lattice $L$ by $\reLD(L)$ (and its $d$-dimensional version by $\reLD^d(L))$. As these line diagrams automatically respect the lattice order, we have $\reLD(L)\subseteq\LD(L)$ and, in general, $\reLD^d(L)\subseteq\LD^d(L)$.

\subsection{Doubly-additive two-dimensional extensions}
\label{subsec:additive_two_dimensional_extension}
To combine DimDraw with the FDP-model as an approach to initialize the concept positions, the line diagram must be additive, as the forces are applied to the element vectors. Given that the original approach requires attribute-additive line diagrams, it is trivial to see that DimDraw layouts do not naturally fit this criterion, particularly if the lattice is non-distributive. Therefore, we investigate the $N_5$ and $M_3$, which are forbidden sublattices of any distributive lattice, according to Birkhoff's theorem. Figure~\ref{fig:not_attribute_additive} shows the symmetric DimDraw layouts, whereas in an attribute-additive line diagram, the bottom concept $c_\bot$ would be placed lower, as its position is strictly defined by the vector sum of the $\bigwedge$-irreducibles above.

\begin{figure}
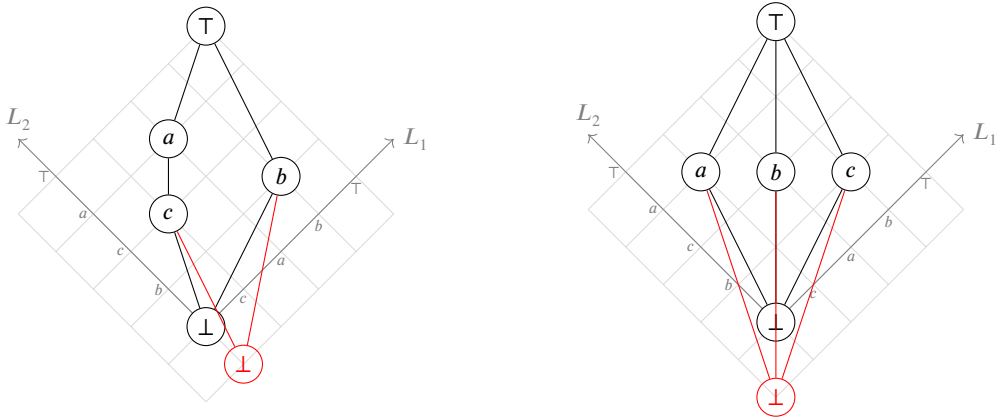

  \centering
  \begin{subfigure}[b]{0.45\textwidth}
    \centering
    \input{figs/dim_draw/additive_N5}
  \end{subfigure}
  \begin{subfigure}[b]{0.45\textwidth}
    \centering
    \input{figs/dim_draw/additive_M3}
  \end{subfigure}
  \caption{Realizer-embedded (black) diagrams for $N_5$ and $M_3$ overlaid with the attribute-additive diagram (red) derived from the attribute vectors.}
  \label{fig:not_attribute_additive}
\end{figure}

Ideally, every concept lattice would possess an additive DimDraw diagram. While experimental results are promising and show that DimDraw often produces the desired additivity, we need to know if this is a guaranteed property or merely a frequent byproduct of certain lattice structures. The following theorem states that there exist lattices which do no not have a line diagram (in 2D) that is both realizer-embedded and additive.
\begin{theorem}
\label{thm:additive_extension}
  Let $L$ be the free modular lattice $\FMT$ and $\K$ its standard context. $L$ does not have a line diagram that is both realizer-embedded and additive, \ie,
  $$
  \reLD(L) \cap \AddLD(\K) = \emptyset\enspace.
  $$
\end{theorem}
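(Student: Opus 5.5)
The plan is to reduce the statement to a finite combinatorial question about linear extensions of $L=\FMT$, and then exclude every candidate using linear relations forced by additivity. Additivity is invariant under the rotation and stretching used by DimDraw. Indeed, $\Add(\K)=\mathrm{im}(\mathrm{SRM})\times\mathrm{im}(\mathrm{SRM})$, and applying a linear map of $\R^2$ to every node replaces the two coordinate columns by linear combinations of themselves, which stay in $\mathrm{im}(\mathrm{SRM})$. So it suffices to consider the unrotated diagram $\POS(c)=(\POS(c)_{<_1},\POS(c)_{<_2})$. Its coordinate columns are rank vectors $r_1,r_2\in\R^{L}$: each $r_i$ is the position function of a linear extension $<_i$ of $L$, hence a bijection $L\to\{0,\dots,27\}$. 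Here $\{<_1,<_2\}$ is a realizer of some extension $(L,\leq\cup T)$ with $T\subseteq\INC(L)$. The claim $\reLD(L)\cap\AddLD(\K)=\emptyset$ then becomes the following: there are no two linear extensions $<_1,<_2$ of $L$ whose intersection is an order containing $\leq$, such that both rank vectors lie in $\mathrm{im}(\mathrm{SRM})$.

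Next I would describe $\mathrm{im}(\mathrm{SRM})$ by its orthogonal complement, i.e.\ the left kernel of the $28\times|S|$ matrix SRM, where $S=\JJ(L)\DC\MM(L)$. The natural kernel vectors come from "additive squares": if $a,b\in L$ satisfy $\REP(a\vee b)=\REP(a)\cup\REP(b)$ and $\REP(a\wedge b)=\REP(a)\cap\REP(b)$, then every additive placement satisfies
$$
\POS(a)+\POS(b)=\POS(a\vee b)+\POS(a\wedge b)\enspace.
$$
In $\FMT$ this holds for many pairs lying in the distributive parts, i.e.\ the Boolean-like regions above and below the central part. Inside the $M_3$-type intervals, the doubly-additive representation should give further relations among the atoms and coatoms of each such interval. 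I would compute a basis of the left kernel explicitly, which is a routine computation on the 28 concepts, and read off these relations.

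Then each rank vector $r_i$ must satisfy $r_i(a)+r_i(b)=r_i(a\vee b)+r_i(a\wedge b)$ for every additive square. For a linear extension this is very restrictive: the two "diagonal" gaps must be equal. Chaining these equalities across neighbouring squares forces long runs of equal gaps. I would then try to show that they clash with the symmetric placement the $M_3$-relations force on the middle of the lattice, together with the fact that the ranks are pairwise distinct integers. The cleanest version of the argument uses only one coordinate: show that no linear extension of $\FMT$ at all has a rank vector in $\mathrm{im}(\mathrm{SRM})$, which makes the realizer condition irrelevant. If that turns out to be false, I would add the realizer condition. Because $\FMT$ has order dimension greater than $2$, $T\neq\emptyset$, and at some incomparable pair $p\parallel q$ both extensions order $p$ and $q$ the same way. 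Combined with the kernel relations, this should force a contradiction in the second coordinate.

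The main obstacle is this last combinatorial step, namely proving that the kernel equations have no solution among the permitted rank vectors. If a clean hand argument via the symmetries of $\FMT$ (its automorphisms permuting the three generators, and duality) does not emerge, I would fall back to an exhaustive but finite verification. This would be a backtracking search that builds linear extensions of $L$ element by element and prunes as soon as a kernel equation whose support is already fully ranked fails. Since the constraints are linear with small integer coefficients, pruning should be very effective. For every surviving single-coordinate candidate, the search would then test whether a compatible second extension exists.
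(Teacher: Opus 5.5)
Your route is correct, and it differs from the paper's. The paper argues computationally: DimDraw's SAT solver finds that a minimal two-dimensional extension of $\FMT$ needs $15$ added pairs, the $12$ minimal solutions are identified up to automorphisms, and projecting one representative diagram into $\Add(\K)$ moves it. You instead aim at a stronger claim: no linear extension of $\FMT$ has a $0$-based rank vector in $\mathrm{im}(\mathrm{SRM})$. Via the invertible rotate-and-stretch map, this settles every realizer-embedded diagram at once. (The convention $\POS(\bot)=(0,0)$ matters here, since translations do not preserve additivity.) Your argument is independent of the minimality of $T$, of which realizer of $(L,\leq\cup T)$ is used, and of the symmetry reduction the paper relies on. The paper's method is mechanical and transfers to other lattices; yours gives a certificate a human can check.

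Your stronger claim is in fact true and can be finished by hand, so the fallback search is unnecessary. Write $b_x=(x\wedge y)\vee(x\wedge z)$, $p=b_x\vee(y\wedge z)$, $c_x=x\wedge(y\vee z)$, $m_x=c_x\vee p$, $d_x=x\vee(y\wedge z)$, $q=(x\vee y)\wedge(x\vee z)\wedge(y\vee z)$ and $e_x=(x\vee y)\wedge(x\vee z)$, symmetrically for $y,z$. Then $\mathrm{im}(\mathrm{SRM})$ is $10$-dimensional, and $r$ lies in it iff all of the following hold:
\begin{itemize}
\item $r(0)=0$;
\item $r$ is additive on $[0,p]$ and $r-r(q)$ is additive on $[q,1]$;
\item $r(c_x)-r(b_x)=r(m_x)-r(p)$ and $r(x)-r(c_x)=r(d_x)-r(m_x)=r(e_x)-r(q)$ for each generator.
\end{itemize}
Summing all ranks gives $r(p)+r(q)+2\sum_x r(m_x)=108$. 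Up to symmetry, ranks $1,2,3$ go to $x\wedge y$, $x\wedge z$, $b_x$, and rank $4$ goes to $y\wedge z$ or to $c_x$. In each branch, three constraints leave only a few parameter choices: this identity, the bounds $14\leq r(q)\leq 20$, and the need for eight distinct ranks on $[q,1]$. Each remaining choice quickly produces either two elements of equal rank or a rank that no element can occupy.

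Two of your expectations need correcting, though neither breaks the argument.
\begin{itemize}
\item The central interval $[p,q]\cong M_3$ imposes no relations. The value $r(q)$ and the three differences $r(m_x)-r(p)$ are free. The $18$-dimensional left kernel is spanned by $r(\bot)=0$ and $17$ parallelogram relations on the two cubes and the nine side squares. The contradiction therefore comes from these relations plus integrality and bijectivity (for instance via the rank-sum identity), not from any relation forced inside $M_3$.
\item Your fallback ``add the realizer condition'' is vacuous as you formulated it. Any two linear extensions of $L$ realize $(L,\leq\cup T)$, where $T$ is the set of incomparable pairs on which they agree, so $T\neq\emptyset$ constrains nothing. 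Had the one-coordinate claim failed, the pair with $<_1=<_2$ would already refute your reformulated statement. The only extra information in the paper's definition of $\reLD(L)$ is the minimality of $T$, and your reformulation discards it.
\end{itemize}
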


\begin{proof}[Proof of Theorem~\ref{thm:additive_extension}]
  For the standard context of $\FMT$ the SAT-solver of DimDraw~\cite{Duerrschnabel2023} identifies that the order relation has to be extended with at least $k = 15$ tuples to make it two-dimensional (see Figure~\ref{fig:additive_FM3} left). There are 12 different minimal solutions, but they are all symmetrically equivalent, differing only by rotation, mirroring, or the wing side. This behavior is a direct consequence of the automorphisms of $\UBBK$. Therefore, adding a temporary edge in the top cube determines the corresponding dual edge for the bottom cube to maintain the internal symmetry of the lattice. Since the underlying constraints are invariant under these symmetries, verifying a single representative diagram is sufficient to prove the properties for the entire class. Since the orthogonal projection into the additive space differs from the order diagram of DimDraw (see Figure~\ref{fig:additive_FM3} right), the realizer-embedded diagram (and any of its symmetric variants) is not additive.
\end{proof}

\begin{figure}
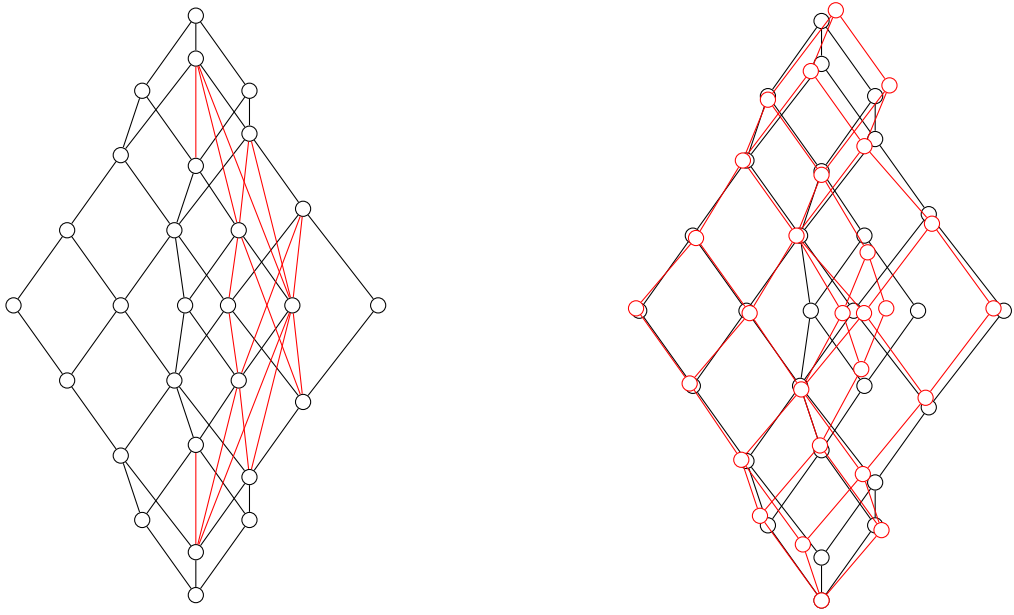

  \centering
  \begin{minipage}{0.5\textwidth}
    \centering
    \input{figs/dim_draw/tmp_FM3}
  \end{minipage}%
  \begin{minipage}{0.5\textwidth}
    \centering
    \input{figs/dim_draw/additive_FM3}
  \end{minipage}%
  \caption{DimDraw diagrams of the $\FMT$ (black). The left order diagram shows the inserted temporary edges (red) required for the two-dimensional extension. The right one is overlaid with the closest doubly-additive order diagram (red).}
  \label{fig:additive_FM3}
\end{figure}

\subsection{Doubly-additive realizers in the original order dimension}
\label{subsec:additive_realizer}
Since an additive two-dimensional extension does not exist for every concept lattice (as just shown above), one might expect that the realizer-embedded diagrams in the original order dimension of the lattice are additive (or at least one of them). If that were true then we might eventually apply a suitable linear projection to 2D to obtain a well readable diagram. 

Considering a finite lattice $L$ with $\dim(L)=d\geq 3$ and its standard context $\K:=\GMI:=(\JJ(L),\MM(L)\leq)$. To verify whether a given realizer $\RR = \{ <_1, \ldots, <_d \}$ actually satisfies the conditions for additivity, we have to find a mapping $\VEC \colon \JJ(L) \DC \MM(L) \to \R^d$ such that
$$
\POS(c)_{<_i} = \sum_{e \in A \cup (M \setminus B)} \VEC(e)_i\enspace, \textrm{ for all } c\in\uB(\K) \textrm{, for all } <_i \in \RR.
$$
The mapping $\VEC$ is thus defined by a system of $d\cdot |\B(\K)|$ linear equations over $d\cdot |\JJ(L)\dot\cup\MM(L)|$ variables (one variable for each dimension of $\VEC(e)$ for each $e\in \JJ(L)\dot\cup\MM(L)$). 

To avoid an exhaustive search through all possible realizers, we employ a \emph{Satisfiability Modulo Theories (SMT) solver}~\cite{Barrett2009}, which is an extension of classical SAT solvers to handle numerical constraints. By encoding our requirements into a formal model, the SMT solver will either find a doubly-additive realizer for a given concept lattice or prove that none exists.

We formally encode the requirements for a $d$-dimensional doubly-additive realizer into the SMT framework as follows: For each linear extension $<_i \in \RR$, we represent every element $e \in G \DC M$ as a real-valued variable $\VEC(e)_{<_i} \in \R$ and define the position of a formal concept $c \in \BBK$ as a natural number $\POS(c)_{<_i} \in \N_0$. To link these sets of SMT variables, we model the previously defined linear equations as constraints, ensuring that the sum of the element vectors equals the position of the corresponding formal concept in the linear extension. Furthermore, to enforce that the collection of linear extensions forms a valid realizer, we must model the underlying partial order. For all comparable pairs $(u, v) \in \; \leq$, the position of formal concept $u$ must be strictly less than the position of concept $v$ across all linear extensions:
$$
\forall (u, v) \in \; \leq, \forall <_i \in \RR \colon \POS(u)_{<_i} < \POS(v)_{<_i}\enspace.
$$
Conversely, to capture all incomparable pairs $(u, v) \in \INC(P)$, we must ensure that the order is not fixed. Accordingly, we require the relative order to be reversed in at least one linear extension:
$$
\forall \{ u, v \} \in \INC(P) \colon \exists <_1, <_2 \in \RR \text{ such that }  \POS(u)_{<_1} < \POS(v)_{<_1} \text{ and } \POS(v)_{<_2} < \POS(u)_{<_2}\enspace,
$$
which ensures that the intersection of all linear extensions recovers no order between incomparable pairs.

By fixing the position of the bottom concept $c_\bot$ and the top concept $c_\top$ for every linear extension $<_i \in \RR$:
$$
\begin{aligned}
  \forall <_i \in \RR \colon \POS(c_\bot)_{<_i} &= 0 \\
  \forall <_i \in \RR \colon \POS(c_\top)_{<_i} &= |\BBK| - 1\enspace,
\end{aligned}
$$
we ensure that all other concepts strictly lie between them. Because the positions are restricted to unique natural numbers, we enforce a uniform step-width of 1. Therefore, when the position assignments respect the constraints for all comparable and incomparable pairs, they form a valid realizer. Furthermore, the criterion of additivity ensures that these positions equal the sum of the objects in the extent and the attributes in the complement intent, which enforces a mapping of the order onto a consistent numerical coordinate system.

\begin{theorem}
\label{thm:additive_realizer_rd}
  Let $L$ be the free modular lattice $\FMT$ and $\K$ its standard context. $L$ does not have a $d$-dimensional line diagram that is both realizer-embedded and additive, \ie,
  $$
  \reLD^d(L) \cap \AddLD^d(\K) = \emptyset\enspace.
  $$
\end{theorem}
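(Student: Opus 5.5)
The plan is to exploit the fact that additivity is a \emph{coordinatewise} condition. This turns the statement for arbitrary $d$ into a finite family of one-dimensional feasibility questions about single linear extensions.

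Fix the standard context $\K$ of $L=\FMT$ and its set representation matrix SRM. A $d$-dimensional placement $\POS$ is additive iff each of its $d$ coordinate columns lies in $\textrm{im(SRM)}$, exactly as in Section~\ref{subsec:additivity_preserving_transformations} for $d=2$. Equivalently, each column satisfies $\lambda^T x=0$ for every $\lambda$ in a basis of the left kernel of SRM. I would compute this kernel explicitly; it is a small computation over $\mathbb{Q}$ for the 28 concepts. For a realizer-embedded diagram, the $i$-th coordinate column is the rank vector $r_{<_i}$ of the linear extension $<_i$, with values $0,\dots,|L|-1$.

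Call a linear extension $<$ of $L$ \emph{admissible} if $r_<\in\textrm{im(SRM)}$. Then $\reLD^d(L)\cap\AddLD^d(\K)\neq\emptyset$ for some $d$ iff some set of admissible linear extensions is a realizer of $L$. This holds iff for every ordered incomparable pair $(u,v)$ there is an admissible extension placing $v$ before $u$. Hence it suffices to exhibit one incomparable pair $(u,v)$ such that \emph{every} admissible linear extension puts $u$ before $v$. This proves the claim simultaneously for all $d$, and in particular for $d=\DIM(L)$.

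Key steps, in order:
\begin{enumerate}
\item Compute SRM and a basis of its left kernel.
\item Try to read off a structural obstruction. The natural candidates are linear relations of the form $x_a+x_b=x_{a\wedge b}+x_{a\vee b}$ coming from covering squares, together with the relations that fail in the non-distributive $M_3$-type parts of $\FMT$. These force rank differences that are incompatible with a strict reversal of some pair.
\item If no clean argument emerges, decide the one-dimensional problem by an SMT/ILP query. The variables are the ranks $r(c)\in\{0,\dots,|L|-1\}$, all distinct. The constraints are $r(u)<r(v)$ for $u<v$, the kernel equations, and the extra constraint $r(v)<r(u)$ for a chosen incomparable pair.
\item Use the automorphism group of $\FMT$ (the $S_3$ permuting generators, together with duality, as in the proof of Theorem~\ref{thm:additive_extension}) to reduce the incomparable pairs to a few orbit representatives. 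Then find one representative whose query is unsatisfiable.
\end{enumerate}

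I expect the main obstacle to be the last step: certifying unsatisfiability and making it checkable. The search space of linear extensions of a 28-element lattice is huge, so an enumeration is not viable. The decomposition above keeps each query small, with only 28 integer variables and a handful of linear equations. Still, a human-readable certificate would be preferable. I would first try to extract one from the solver's unsat core, as a small combination of kernel equations and order constraints. Such a combination would yield an inequality $r(u)-r(v)\ge 1$ valid for all admissible extensions, and that inequality could be stated and verified by hand in the proof.
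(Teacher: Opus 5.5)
Your proposal is correct, and it takes a genuinely different route from the paper. The paper poses the $d=3$ question as one monolithic SMT instance: three integer position vectors, the real components of $\VEC$ in each coordinate, and the disjunctive ``reversed in some extension'' constraints. The theorem is then read off Z3's UNSAT. You instead use that $\AddLD^d(\K)=\mathrm{im}(\mathrm{SRM})^d\cap\LD^d(L)$ is a coordinatewise condition, which is also invariant under the linear normalisations DimDraw applies. So every member of a realizer must be admissible on its own, and the realizer condition reduces to: each ordered incomparable pair is reversed by \emph{some} admissible extension. This buys small single-extension queries and a stronger conclusion: no additive realizer-embedded diagram arises from a realizer of $L$ of \emph{any} size. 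The price is that your criterion is only sufficient for $d=3$. If every incomparable pair were reversed by some admissible extension, but no three of them formed a realizer, your plan would yield nothing, whereas the paper's direct query would still settle $d=3$.

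For $\FMT$ this risk does not materialize, and your step~2 can in fact be completed by hand. Write $u=(x\wedge y)\vee(x\wedge z)\vee(y\wedge z)$, $v=(x\vee y)\wedge(x\vee z)\wedge(y\vee z)$, and $m_x=(x\wedge(y\vee z))\vee(y\wedge z)$, cyclically for $y,z$. Then $\dim\mathrm{im}(\mathrm{SRM})=10$, and an admissible rank vector $\rho$ is determined by positive integers $\sigma_g,r_g,t_g$ ($g\in\{x,y,z\}$) and $w>\max_g r_g$. For example, with $S:=\sigma_x+\sigma_y+\sigma_z$:
\begin{itemize}
\item $\rho(y\wedge z)=\sigma_x$ and $\rho(u)=S$;
\item $\rho(x\wedge(y\vee z))=S-\sigma_x+r_x$ and $\rho(m_x)=S+r_x$;
\item $\rho(x)=S-\sigma_x+r_x+t_x$ and $\rho(v)=S+w$.
\end{itemize}
Ranks $1,2,3$ force the atoms to ranks $1,2,\sigma_3$ with $\sigma_3\ge 4$. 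The ranks strictly between $3$ and $\sigma_3$ can only be taken by $x\wedge(y\vee z)$ and $x$, where $x$ is the generator opposite the third atom. Hence $\sigma_3\in\{4,5,6\}$, and dually at the top. A short check of the nine resulting combinations rules out all of them, by one of:
\begin{itemize}
\item forced parameter values that contradict each other;
\item coincidences such as $\rho(m_g)=\rho(g)$ when $\sigma_g=t_g$;
\item a parity count for the middle ranks.
\end{itemize}
Hence \emph{no} linear extension of $\FMT$ is admissible, every one of your queries is UNSAT, and the human-readable certificate you were after exists. Linear extensions of a two-dimensional extension of $L$ are also linear extensions of $L$. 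So the same fact even yields Theorem~\ref{thm:additive_extension} for arbitrary two-dimensional extensions, not only the minimal ones.
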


\begin{proof}[Proof of Theorem~\ref{thm:additive_realizer_rd}]
  From order theory, we know that the order dimension of $\FMT$ is $\DIM(\FMT) = 3$. Accordingly, to satisfy the theorem, we must show that no doubly-additive realizer of size $d = 3$ exists for the $\FMT$.

  Therefore, we encoded the constraints of a doubly-additive realizer $\RR = \{ <_1, <_2, <_3 \}$ into the SMT solver descibed above:
  \begin{enumerate}
    \item 
    $\forall <_i \in \RR, \forall e \in G \DC M \colon \VEC(e)_{<_i} \in \R$
    \item 
    $\forall <_i \in \RR, \forall c \in \BBK \colon \POS(c)_{<_i} \in \N_0 $
    \item 
    $\forall <_i \in \RR, \forall (A, B) \in \BBK \colon \sum_{e \in A \cup (M \setminus B)} \VEC(e)_{<_i}$
    \item 
    $\forall <_i \in \RR, \forall (u, v) \in \; \leq \colon \POS(u)_{<_i} < \POS(v)_{<_i}$
    \item 
    $\forall (u, v) \in \INC(\UBBK), \exists <_1, <_2 \in \RR \colon (\POS(u)_{<_1} < \POS(v)_{<_1}) \land (\POS(v)_{<_2} < \POS(u)_{<_2})$
    \item 
    $\forall <_i \in \RR \colon \POS(c_\bot)_{<_i} = 0$
    \item 
    $\forall <_i \in \RR \colon \POS(c_\top)_{<_i} = |\BBK| - 1\enspace$.
  \end{enumerate}
  The encoding was executed for the $\FMT$ using the SMT solver Z3~\cite{DeMoura2008}, which returned $\text{UNSAT}$, indicating that no such set of vectors and linear extensions exists that satisfies the doubly additivity constraints in $\R^d$ for the given lattice. 
\end{proof}

\section{DimFlux}
\label{sec:dim_flux}
Building on these methods, we introduce \textbf{DimFlux}, an algorithm for visualizing concept lattices that combines the advantages of order-theoretic structure and visual clarity. Firstly, we begin by computing an initial realizer-embedded line diagram, which is then refined by maximizing the conflict distance between concept nodes and non-incident edges. This optimization is specifically tuned to the analytical needs of Formal Concept Analysis, addressing common visualization taxonomies~\cite{Amar2005}, such as tracing paths, identifying up-sets and down-sets, or locating concepts linked to specific objects and attributes. Finally, by maintaining the realizer-embedded structure while ensuring concept nodes remain distinct from non-incident edges, DimFlux is designed to improve the interpretation of line diagrams of concept lattices.

\begin{figure}
    \centering

    \colorlet{inputcolor}{blue!10}
    \colorlet{actioncolor}{green!10}
    \colorlet{outputcolor}{red!10}

    \begin{tikzpicture}[
        base/.style={
            draw=black,
            thick,
            align=center,
            font=\small\sffamily,
            minimum height=1.0cm,
            inner xsep=5pt
        },
        action/.style={
            base,
            rectangle,
            rounded corners=6pt,
            minimum width=2.4cm,
            fill=actioncolor
        },
        object/.style={
            base,
            rectangle,
            minimum width=2.2cm,
            fill=white 
        },
        input/.style={object, fill=inputcolor},
        output/.style={object, fill=outputcolor},
        flow/.style={
            ->, 
            >={Stealth[scale=1.2]}, 
            thick,
            shorten >=2pt, 
            shorten <=2pt
        }
    ]

    \node (n1) [input] at (1.5,0) {Concept\\Lattice};
    \node (n2) [action] at (5.0,0) {Two-Dimensional\\Extension};
    \node (n3) [action] at (8.5,0) {Projection};
    \node (n4) [action] at (12.0,0) {Force-Directed\\Refinement};
    \node (n5) [output] at (15.5,0) {DimFlux\\Diagram};

    \begin{scope}[flow]
        \draw (n1.east) -- (n2.west);
        \draw (n2.east) -- (n3.west);
        \draw (n3.east) -- (n4.west);
        \draw (n4.east) -- (n5.west);
    \end{scope}

    \begin{scope}[on background layer]
        \node[fill=inputcolor!50, rounded corners, fit=(n1), label={above:{\textbf{Input}}}] {};
        \node[fill=actioncolor!50, rounded corners, fit=(n2) (n3) (n4), label={above:{\textbf{Algorithm Steps}}}] {};
        \node[fill=outputcolor!50, rounded corners, fit=(n5), label={above:{\textbf{Output}}}] {};
    \end{scope}

    \end{tikzpicture}
    \caption{The logical flow of our DimFlux algorithm.}
    \label{fig:dim_flux}
\end{figure}
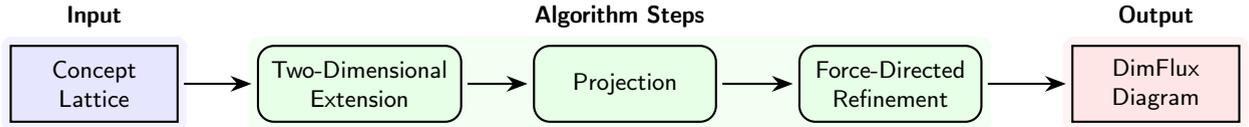

\subsection{Workflow}
\label{subsec:workflow}
Given a formal context $\K = \GMI$, our algorithm generates a line diagram of the corresponding concept lattice through a three-step process. The DimDraw algorithm provides the structural foundation by translating order-theoretic properties into geometric coordinates. For a given set of formal concepts, it initially maps each concept node to a point in $\R^2$ based on a two-dimensional extension of the lattice. To transform this initial layout into a doubly-additive line diagram, these coordinates are projected into the additive space (see Section~\ref{sec:additive_projection}), using the Gram-Schmidt orthogonalization method to calculate the orthogonal projection into the space of additive placements $\AddLD(\K)$. Finally, we refine the line diagram through the extended force-directed placement model (see Section~\ref{sec:fdp_doubly}), which focuses on maximizing the conflict distance between concept nodes and non-incident edges, effectively improving the readability of the line diagram.

While concept lattices of real-world data may be significantly larger, only lattices with up to 20-30 elements are drawn in practice. For larger concept lattices, decomposition methods like subdirect decomposition are applied to split the lattice into smaller parts, enabling alternative visualizations such as nested line diagrams.

The DimFlux implementation (see Figure~\ref{fig:dim_flux}), including the two-dimensional extension, projection, and force-directed placement model, is available as open-source software.\footnote{The source code is available at \url{https://github.com/marcelnoehre/dim-flux}} Furthermore, the first two steps of the workflow can be replaced by computing an initial layout using the doubly-additive version of the planarity enhancer (as described in Subsection~\ref{subsec:fdp_doubly_initialization}) to align with the original approach by Zschalig.

\subsection{Runtime Discussion}
\label{subsec:runtime}
The total runtime complexity of our algorithm can be traced back to the three main stages: the initial layout generation using DimDraw, the projection into the additive space, and the force-directed refinement. The initial layout generation, which relies on the Rust implementation of DimDraw~\cite{Duerrschnabel2025}, runs in $O(n^3)$ time due to the transitive orientation algorithm by Golumbic~\cite{Golumbic1977}. For each following iteration that extracts bipartite subgraphs to construct the transitive incompatibility graph, the complexity is $O(n^2)$, where $n$ is the number of formal concepts; however, the solver terminates after a small number of iterations in practice~\cite{Duerrschnabel2023}. The subsequent projection step, based on the Gram-Schmidt orthogonalization, scales at $O(nm^2)$~\cite{Lyubashevsky2015}, where $n$ is the number of formal concepts and $m$ is the number of elements $|G| + |M|$. Finally, the force-directed placement, relying on a Conjugate Gradient optimizer, has a complexity of $O(m\sqrt{\kappa})$~\cite{Shewchuk1994}, where $\kappa$ is the number of iterations required for convergence. In summary, the projection and force-based refinement, which also terminate after a small number of iterations, are computationally less expensive than the initial layout generation, which remains the primary runtime bottleneck. Consequently, the algorithm is well-suited for medium-sized concept lattices where the one-time $O(n^3)$ cost remains manageable. Moreover, this layout cost is well below the complexity of constructing the lattice itself, which can scale exponentially at $O(2^m)$ for $m = \operatorname{min}(|G|, |M|)$.

\section{Evaluation}
\label{sec:evaluation}
To evaluate our approach, we generated line diagrams for all 126 lattices with four meet-irreducibles (which equal all 126 concept lattices from reduced formal contexts with four attributes), together with some real-world examples from the book~\cite{GanterFCA2024}. A comprehensive collection of our results is available on Zenodo~\cite{NoehreDimFlux}, including the lattice drawings, their distance to the hand-drawn versions, and the energy scores from the doubly-additive force model. For non-doubly-additive diagrams, we used the projected version (indicated by an $^{\ast}$). Lattices are identified either by their index within the set of all 126 lattices with four meet-irreducibles or by name for the real-world examples. Diagrams are given both as coordinates in lectic order and as drawings in a PDF document. For each lattice, we generated the following line diagrams:

\begin{enumerate}
	\item \textbf{Hand-Drawn:} 
   Diagrams that were manually created by an expert for all lattices with four meet-irreducibles and for the real-world examples.
	\item \textbf{Attribute-Additive FDP:} 
  Line diagrams computed by the original force-directed placement model for attribute-additive line diagrams~\cite{ZschaligFDP}.
  \item \textbf{Doubly-Additive FDP:} 
  Our extension of the force-directed placement model for doubly-additive line diagrams.
  \item \textbf{DimDraw:}
  Order diagrams through a two-dimensional extension~\cite{Duerrschnabel2019}.
  \item \textbf{DimFlux:}
  Line diagrams derived from a two-dimensional extension that are projected into the additive space and subsequently refined using a force-directed placement model.
\end{enumerate}

While universal quality metrics for graph drawings exist~\cite{Mooney2025}, they failed to reveal clear trends across the different drawing algorithms in our study (see~\cite[Section~3]{NoehreDimFlux}). Notably, hand-drawn versions often scored poorly on these standard metrics, suggesting that the visual criteria for line diagrams of lattices deviate from those of general graph drawing.

Consequently, we measure the Euclidean distance between the algorithmic results and their hand-drawn counterparts as our primary metric. By representing each diagram of a lattice $L$ as a vector in $\mathbb{R}^{2|L|}$, we can quantify how much the calculated node positions deviate from the ``optimal'' hand-drawn layout. To ensure comparability across different lattices, all diagrams were normalized to a standard height of ten units.

\begin{figure}
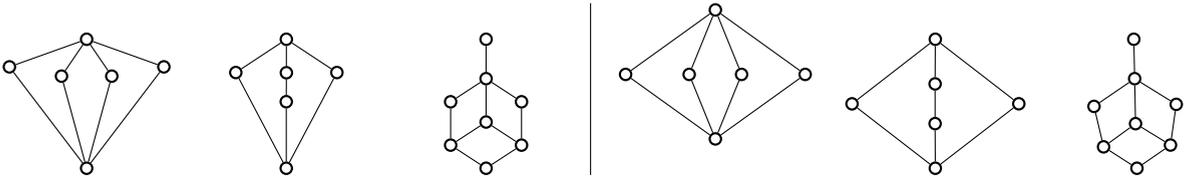

  \centering
  \begin{subfigure}[b]{0.16\textwidth}
    \centering
    \input{figs/evaluation/sup_inf_attribute_22}
  \end{subfigure}%
  \begin{subfigure}[b]{0.16\textwidth}
    \centering
    \input{figs/evaluation/sup_inf_attribute_19}
  \end{subfigure}%
  \begin{subfigure}[b]{0.16\textwidth}
    \centering
    \input{figs/evaluation/sup_inf_attribute_8}
  \end{subfigure}%
  \vrule
  \hfill
  \begin{subfigure}[b]{0.16\textwidth}
    \centering
    \input{figs/evaluation/sup_inf_doubly_22}
  \end{subfigure}%
  \hfill
  \begin{subfigure}[b]{0.16\textwidth}
    \centering
    \input{figs/evaluation/sup_inf_doubly_19}
  \end{subfigure}%
  \begin{subfigure}[b]{0.16\textwidth}
    \centering
    \input{figs/evaluation/sup_inf_doubly_8}
  \end{subfigure}%
  \caption{Comparison of attribute-additive (left) and doubly-additive (right) line diagrams of concept lattices 22, 19, and 8 (left to right), each initialized using its respective Sup-Inf graph.}
  \label{fig:evaluation_fdp_layer}
\end{figure}

\begin{figure}
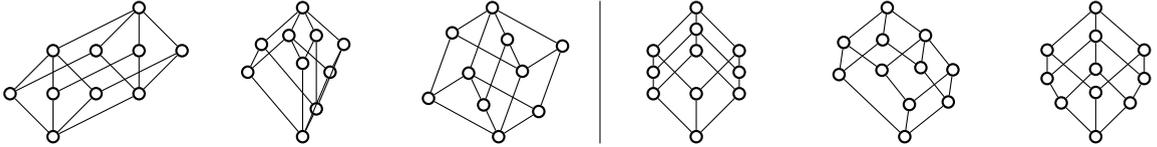

  \centering
  \begin{subfigure}[b]{0.16\textwidth}
    \centering
    \input{figs/evaluation/hand_drawn_77}
  \end{subfigure}%
  \begin{subfigure}[b]{0.16\textwidth}
    \centering
    \input{figs/evaluation/sup_inf_attribute_77}
  \end{subfigure}%
  \begin{subfigure}[b]{0.16\textwidth}
    \centering
    \input{figs/evaluation/sup_inf_doubly_77}
  \end{subfigure}%
  \vrule
  \begin{subfigure}[b]{0.16\textwidth}
    \centering
    \input{figs/evaluation/hand_drawn_98}
  \end{subfigure}%
  \begin{subfigure}[b]{0.16\textwidth}
    \centering
    \input{figs/evaluation/sup_inf_attribute_98}
  \end{subfigure}%
  \begin{subfigure}[b]{0.16\textwidth}
    \centering
    \input{figs/evaluation/sup_inf_doubly_98}
  \end{subfigure}%
  \caption{Hand-drawn (left), attribute-additive (center), and doubly-additive (right) line diagrams for concept lattices 77 and 98, initialized using their respective Sup-Inf graphs.}
  \label{fig:evaluation_fdp_initialization}
\end{figure}

\subsection{Doubly-Additive FDP}
\label{subsec:evaluation_doubly}
Firstly, we compare the original attribute-additive approach by Zschalig against the adapted version for doubly-additive line diagrams. While additive line diagrams are generally effective at highlighting symmetries within concept lattices, the doubly-additive representation is inherently more balanced because it integrates both object and attribute influences. This structural advantage is particularly evident in non-distributive lattices that contain sublattices isomorphic to $N_5$ or $M_3$ (see figure~\ref{fig:evaluation_fdp_layer}).

Another benefit of the extended forces is the improvement in structural layering. In the original attribute-additive approach, vectors are often assigned a similar height, which results in vertical offsets, particularly within non-distributive lattices. Considering $M_4$ (lattice 22, see Figure~\ref{fig:evaluation_fdp_layer}), the sum of the contributing attribute vectors pushes the lower concept node further down, which also skews the vertical layout of surrounding structures. The advantage of the adapted FDP-model for doubly-additive line diagrams is directly visible in the $N_5$ sublattices (lattice 19), as the nodes of the concepts in the shorter chains are not vertically aligned between the two incomparable elements from the longer chain. In the doubly-additive version, the diagrams maintain a more intuitive and balanced vertical distribution.

However, the extended force-directed placement model does not perform better for all considered concept lattices. Due to the higher degree of freedom introduced by the larger vector set, the concept nodes drift away, as the underlying forces are designed to maximize the conflict distance. Particularly, double-irreducible nodes are affected by this behavior, which is frequently observed in the top nodes of the $S_7$ sublattice (lattice 8,  see Figure~\ref{fig:evaluation_fdp_layer}). However, this issue becomes less pronounced in larger concept lattices, due to surrounding non-incident edges that push the concept nodes back into the $S_7$ structure.

In summary, the doubly-additive force-directed placement version yields more vertically balanced layouts with improved spacing due to the higher degree of freedom. The doubly-additive version reveals common sublattices across all lattices --- such as $N_5$, $M_3$, $S_7$, or $B_3$ --- in a stable shape that is close to hand-drawn versions. By revealing these well-known patterns with such structural clarity, the algorithm enables users to comprehend the underlying formal context more effectively, enabling the desired taxonomic tasks~\cite{Amar2005} such as identifying concepts linked to specific objects and attributes through visual recognition.

\begin{figure}
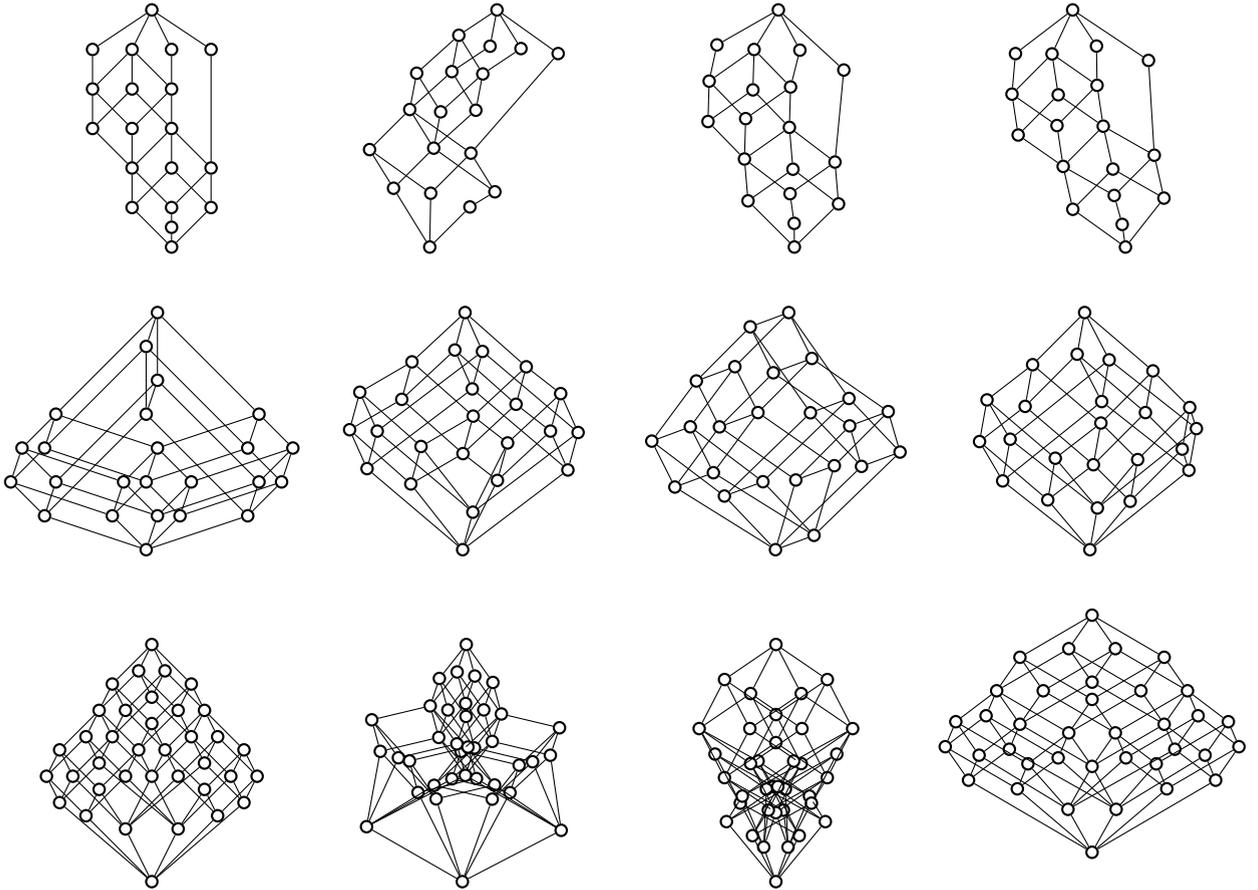

  \centering
  \begin{subfigure}[b]{0.33\textwidth}
    \centering
    \input{figs/evaluation/hand_drawn_forum_romanum}
  \end{subfigure}%
  \begin{subfigure}[b]{0.33\textwidth}
    \centering
    \input{figs/evaluation/sup_inf_attribute_forum_romanum}
  \end{subfigure}%
  \begin{subfigure}[b]{0.33\textwidth}
    \centering
    \input{figs/evaluation/sup_inf_doubly_forum_romanum}
  \end{subfigure}%

  \vspace{2em}

  \begin{subfigure}[b]{0.33\textwidth}
    \centering
    \input{figs/evaluation/hand_drawn_driving_concepts}
  \end{subfigure}%
  \begin{subfigure}[b]{0.33\textwidth}
    \centering
    \input{figs/evaluation/sup_inf_attribute_driving_concepts}
  \end{subfigure}%
  \begin{subfigure}[b]{0.33\textwidth}
    \centering
    \input{figs/evaluation/sup_inf_doubly_driving_concepts}
  \end{subfigure}%

  \vspace{2em}

  \begin{subfigure}[b]{0.33\textwidth}
    \centering
    \input{figs/evaluation/hand_drawn_convex_ordinal}
  \end{subfigure}%
  \begin{subfigure}[b]{0.33\textwidth}
    \centering
    \input{figs/evaluation/sup_inf_attribute_convex_ordinal}
  \end{subfigure}%
  \begin{subfigure}[b]{0.33\textwidth}
    \centering
    \input{figs/evaluation/sup_inf_doubly_convex_ordinal}
  \end{subfigure}%
  \caption{Hand-drawn (left), attribute-additive (center) and doubly-additive (right) diagrams, initialized using the respective Sup-Inf graphs, for the concept lattices ``Forum Romanum'' (top row), ``Driving Concepts'' (center row) and ``Convex-ordinal scale'' (bottom row).}
  \label{fig:evaluation_fdp_real_world}
\end{figure}

\subsection{Extended Planarity Enhancer}
\label{subsec:extended_planarity_enhancer}
The most significant improvement in the extension of the FDP approach to doubly-additive diagrams lies in the initialization of the line diagrams. By operating on more structural information than the original approach, combining all objects and attributes into a linear order, the extended planarity enhancer demonstrates improvements for medium-sized concept lattices, such as lattice 77 (see Figure~\ref{fig:evaluation_fdp_initialization}). Compared to the original layout, in which a concept at the bottom left prevents the forces from revealing the two connected $M_3$ sublattices, the doubly-additive extension successfully reveals the underlying structure. While this adjustment can lead to a higher number of edge crossings, which conflicts with the original goal of this initialization approach, the resulting layouts often better reflect the structure of the data, as seen in lattice 98 (see Figure~\ref{fig:evaluation_fdp_initialization}), where it reveals the inherent symmetry.

When applied to larger, real-world datasets, our results align with Zschalig's observations regarding the limitations of the planarity enhancer: the approach is effective for small lattices but becomes increasingly unstable as the number of concepts increases. For the Forum Romanum concept lattice with a relatively simple sub-structure of two stacked $B_3$ sublattices, both initializations based on the respective Sup-Inf graphs perform adequately, while the extension improves by centering the longer chain at the bottom. However, this behavior becomes even more unstable with increasing structural complexity, as evident in the driving concepts lattice.

\begin{figure}
  \centering
  \begin{subfigure}[b]{0.33\textwidth}
    \centering
    \input{figs/evaluation/hand_drawn_forum_romanum}
  \end{subfigure}%
  \begin{subfigure}[b]{0.33\textwidth}
    \centering
    \input{figs/evaluation/dim_draw_forum_romanum}
  \end{subfigure}%
  \begin{subfigure}[b]{0.33\textwidth}
    \centering
    \input{figs/evaluation/dim_flux_forum_romanum}
  \end{subfigure}%

  \vspace{2em}

  \begin{subfigure}[b]{0.33\textwidth}
    \centering
    \input{figs/evaluation/hand_drawn_driving_concepts}
  \end{subfigure}%
  \begin{subfigure}[b]{0.33\textwidth}
    \centering
    \input{figs/evaluation/dim_draw_driving_concepts}
  \end{subfigure}%
  \begin{subfigure}[b]{0.33\textwidth}
    \centering
    \input{figs/evaluation/dim_flux_driving_concepts}
  \end{subfigure}%

  \vspace{2em}

  \begin{subfigure}[b]{0.33\textwidth}
    \centering
    \input{figs/evaluation/hand_drawn_convex_ordinal}
  \end{subfigure}%
  \begin{subfigure}[b]{0.33\textwidth}
    \centering
    \input{figs/evaluation/dim_draw_convex_ordinal}
  \end{subfigure}%
  \begin{subfigure}[b]{0.33\textwidth}
    \centering
    \input{figs/evaluation/dim_flux_convex_ordinal}
  \end{subfigure}%
  \caption{Hand-drawn (left), DimDraw (center) and DimFlux (right) diagrams, for the concept lattices ``Forum Romanum'' (top row), ``Driving Concepts'' (center row) and ``Convex-ordinal scale'' (bottom row).}
  \label{fig:evaluation_dim_flux_real_world}
\end{figure}

\subsection{DimFlux}
\label{subsec:evaluation_dimflux}
For small lattices, DimDraw produces layouts comparable to the planarity enhancer. The structural advantages of realizer-embedded line diagrams become more evident in larger, real-world examples, where structural stability is preserved. However, the grid-based positioning, inherent to the underlying two-dimensional extension, is less refined in terms of local readability, specifically regarding the conflict distance between nodes and non-incident edges.

Therefore, we propose computing the initial layout using DimDraw rather than the planarity enhancer, which necessitates projecting the diagram into the space of valid additive diagrams $\AddLD(\K)$. For many small lattices and several larger examples, the original DimDraw diagram is already additive, which is shown in our supplementary collection~\cite{NoehreDimFlux}, where only the energy scores marked with an asterisk ($^{\ast}$) required no projection for DimDraw diagrams. In cases where DimDraw diagrams are non-additive, the distortion required to achieve additivity is minimal, which allows us to preserve the structured layout of realizer-embedded diagrams for subsequent readability-driven refinements.

Since the repulsive force prevents nodes from leaving their initial cells, the global layout remains stable, and the introduction of new edge crossings is effectively prevented. Consequently, the subsequent force-directed placement is dedicated entirely to enhancing the drawing's readability (see Figure~\ref{fig:evaluation_dim_flux_real_world}). Accordingly, this approach successfully combines the structured global layout of realizer-embedded line diagrams with improved local clarity, achieved by maximizing the conflict distance between concept nodes and non-incident edges.

Comparing the resulting drawings to hand-drawn versions, using our metric that treats each lattice drawing as a vector in $\R^{2|L|}$, reveals no single global trend, but DimFlux achieves higher scores (closer proximity to hand-drawn versions) primarily when it produces a more compact layout, as evidenced by the ``Forum Romanum'' and ``convex-ordinal scale'' concept lattices. However, the primary contribution remains the previously mentioned improvement in readability, which yields recognizable sublattices in well-known shapes. In the examples shown in Figure~\ref{fig:evaluation_dim_flux_real_world}, this behavior is most pronounced for the $B_3$ sublattices. In summary, DimFlux generates line diagrams that preserve a structured global realizer-embedding while optimizing local readability, thereby improving taxonomic tasks such as path tracing and the localization of specific concepts.

\section{Limitations}
\label{sec:limit}
We are pleased to see that the DimFlux algorithm, which combines several proven techniques, often produces easy-to-read line diagrams. However, we do not want to promise too much: even DimFlux does not always find a good diagram, nor does it always find the best possible one. As size increases, many concept lattices become too complex to be presented informatively in a single diagram. \FCA then offers alternative representations such as \emph{nested line diagrams}~\cite{GanterFCA2024}. Sometimes geometric representations are preferable: e.g., for $PG(2,2)$, the lattice of the subvector spaces of the three-dimensional vector space over the binary field, the geometric picture as a plane with points and lines is much more concise than a line diagram. Another lattice which is “difficult to draw” is that of all equivalence relations on a four-element set, see Figure~\ref{fig:part4}.

What constitutes ``the best'' diagram may also depend on the interpretation objective. As an example (see Figure~\ref{fig:jektivitaeten}), we show a lattice that reflects the result of a conceptual exploration. The study examined how the properties “injective” and “surjective” can be distributed across mappings $f$, $g$ and their compositions $f\circ g$ and $g\circ f$.

These two examples demonstrate that while DimFlux can often produce well-structured and readable line diagrams, it cannot always handle the inherent complexity of larger lattices or the need for specialized visualizations, such as nested line diagrams. Furthermore, there is no single standard for ``the best'' line diagram; although universal quality metrics for graph drawing exist~\cite{Mooney2025}, they are not generally applicable for lattice drawings.

\begin{figure}
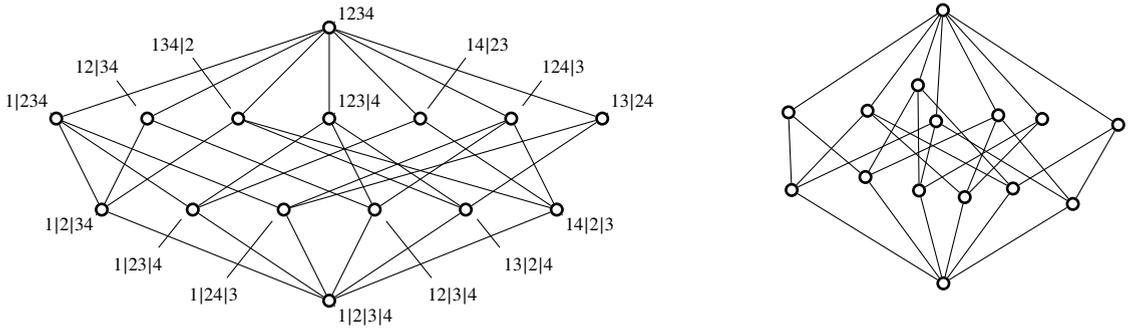

    \centering
    \begin{minipage}{0.6\textwidth}
        \centering
        \input{figs/limitations/hand_drawn_part4}
    \end{minipage}%
    \begin{minipage}{0.4\textwidth}
        \centering
        \input{figs/limitations/dim_flux_part4}
    \end{minipage}%
    \caption{Line diagrams of the lattice of all equivalence relations on a four-element set, hand-drawn~\cite{Biehl2017} (left) and DimFlux (right).}
    \label{fig:part4}
\end{figure}

\begin{figure}
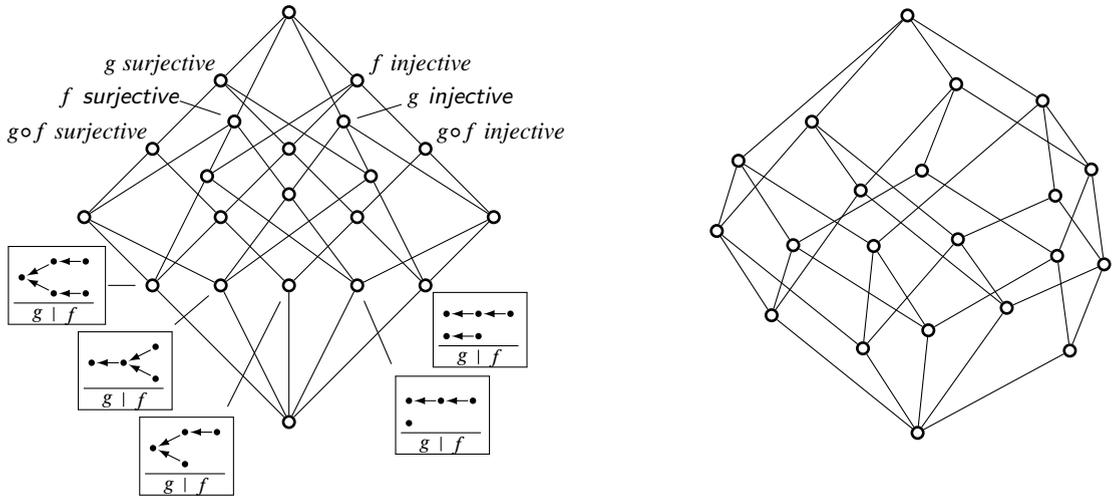

    \centering
    \begin{minipage}{0.6\textwidth}
        \centering
        \input{figs/limitations/hand_drawn_jektivitaeten}
    \end{minipage}%
    \begin{minipage}{0.4\textwidth}
        \centering
        \input{figs/limitations/dim_flux_jektivitaeten}
    \end{minipage}%
  \caption{The line diagram generated by DimFlux (right) does not reflect the symmetry that is visible in the hand-drawn version in~\cite{Ganter.2013} (left).}
  \label{fig:jektivitaeten}
\end{figure}
\section{Conclusion}
\label{sec:conclusion}
Extending Zschalig's force-directed placement approach to doubly-additive line diagrams yields the expected improvements regarding symmetry and layering. While the adapted planarity enhancer offers noticeable improvements for small and medium-sized lattices, it remains unstable for larger lattices, which is a known issue from the original work.

These issues are addressed by \textbf{DimFlux}, which utilizes realizer-embedded line diagrams as an initial layout to ensure stability, particularly for larger concept lattices. By projecting these diagrams into the space of additive diagrams, we enable the application of the doubly-additive force-directed placement, which prevents nodes from leaving their initial cells while maximizing the conflict distance between concept nodes and non-incident edges. As a result, our approach enhances the local readability while strictly preserving the global structure. In particular, DimFlux reveals common sublattices in well-known shapes, thus enabling more efficient taxonomic tasks such as path tracing and the localization of specific concepts.

A remaining challenge is the lack of an intrinsic slope awareness in the force-directed model, which would further increase the readability of the line diagram. While the DimFlux diagrams are readable for a human reader, chains are rarely placed on straight lines, which can detract from the overall visual clarity. Therefore, future work should address the issue of slope awareness, which has already been tackled in force-directed algorithms such as ReDraw~\cite{Duerrschnabel2021}. Accordingly, one could include slope-alignment forces into the force-directed model or implement a post-processing step to straighten chains without breaking the global optimization achieved by the force-directed refinement. One such approach is explained in Subsection~\ref{subsubsec:snap}.

\bibliographystyle{cas-model2-names}
\bibliography{bibliography}

@ARTICLE{sugiyama1981methods,
	author={Sugiyama, Kozo and Tagawa, Shojiro and Toda, Mitsuhiko},
	journal={IEEE Transactions on Systems, Man, and Cybernetics}, 
	title={Methods for Visual Understanding of Hierarchical System Structures}, 
	year={1981},
	volume={11},
	number={2},
	pages={109-125},
	doi={10.1109/TSMC.1981.4308636}
}

@InProceedings{freese2004automated,
	author="Freese, Ralph",
	editor="Eklund, Peter",
	title="Automated Lattice Drawing",
	booktitle="Concept Lattices",
	year="2004",
	publisher="Springer Berlin Heidelberg",
	address="Berlin, Heidelberg",
	pages="112--127",
	isbn="978-3-540-24651-0"
}

@article{aeschlimann1992drawing,
	author = {Aeschlimann, A. and Schmid, J.},
	day = 01,
	description = {Drawing orders using less ink | Order | Springer Nature Link},
	doi = {10.1007/BF00419035},
	issn = {1572-9273},
	journal = {Order},
	month = mar,
	number = 1,
	pages = {5--13},
	title = {Drawing orders using less ink},
	url = {https://doi.org/10.1007/BF00419035},
	volume = 9,
	year = 1992
}

@book{GanterFCA2024,
	title         = {Formal Concept Analysis - Mathematical Foundations},
	subtitle      = {Mathematical Foundations},
	author        = {Bernhard Ganter and Rudolf Wille},
	publisher     = {Springer Cham},
	year          = {2024},
	edition       = {2},
	pages         = {XII, 370},
	isbn          = {978-3-031-63421-5},
	ebookisbn     = {978-3-031-63422-2},
	softcoverisbn = {978-3-031-63424-6},
	address       = {Cham, Switzerland}
}

@book{Ganter.2013,
	title={Diskrete {M}athematik: Geordnete {M}engen},
	author={Ganter, Bernhard},
	year={2013},
	publisher={Springer}
}

@incollection{abdulla2025rises,
	address = {Cham, Switzerland},
	author = {Abdulla, Mohammad and Hille, Tobias and Dürrschnabel, Dominik and Stumme, Gerd},
	booktitle = {Conceptual Knowledge Structures},
	editor = {Cellier, Peggy and Ganter, Bernhard and Missaoui, Rokia},
	journal = {SpringerLink},
	month = {09},
	pages = {392--407},
	publisher = {Springer},
	series = {Lecture Notes in Computer Science},
	title = {Rises for Measuring Local Distributivity in Lattices},
	type = {Publication},
	volume = 15941,
	year = 2025
}

@article{Dushnik1941,
	ISSN = {00029327, 10806377},
	URL = {http://www.jstor.org/stable/2371374},
	author = {Ben Dushnik and E. W. Miller},
	journal = {American Journal of Mathematics},
	number = {3},
	pages = {600--610},
	publisher = {Johns Hopkins University Press},
	title = {Partially Ordered Sets},
	urldate = {2026-03-12},
	volume = {63},
	year = {1941}
}

@article{GanterConflict2004,
	author       = {Ganter, Bernhard},
	title        = {Conflict Avoidance in Additive Order Diagrams},
	journal      = {JUCS - Journal of Universal Computer Science},
	year         = 2004,
	volume       = 10,
	number       = {(8)},
	month        = aug,
	doi          = {10.3217/jucs-010-08-0955},
	url          = {https://doi.org/10.3217/jucs-010-08-0955},
}

@Inbook{wille1989lattices,
	author="Wille, Rudolf",
	editor="Rival, Ivan",
	title="Lattices in Data Analysis: How to Draw Them with a Computer",
	bookTitle="Algorithms and Order",
	year="1989",
	publisher="Springer Netherlands",
	address="Dordrecht",
	pages="33--58",
	isbn="978-94-009-2639-4",
	doi="10.1007/978-94-009-2639-4_2",
	url="https://doi.org/10.1007/978-94-009-2639-4_2"
}

@article{eades84,
	author = {Eades, P.},
	booktitle = {Congressus Numerantium, 42, 149-160.},
	journal = {Congressus Numerantium},
	pages = {149-160},
	title = {A heuristic for graph drawing},
	volume = 42,
	year = 1984
}

@article{Kamada1989,
	title = {An algorithm for drawing general undirected graphs},
	journal = {Information Processing Letters},
	volume = {31},
	number = {1},
	pages = {7-15},
	year = {1989},
	issn = {0020-0190},
	doi = {https://doi.org/10.1016/0020-0190(89)90102-6},
	url = {https://www.sciencedirect.com/science/article/pii/0020019089901026},
	author = {Tomihisa Kamada and Satoru Kawai}
}

@article{Fruchterman1991,
	author = {Fruchterman, Thomas M. J. and Reingold, Edward M.},
	title = {Graph drawing by force-directed placement},
	journal = {Software: Practice and Experience},
	volume = {21},
	number = {11},
	pages = {1129-1164},
	doi = {https://doi.org/10.1002/spe.4380211102},
	url = {https://onlinelibrary.wiley.com/doi/abs/10.1002/spe.4380211102},
	eprint = {https://onlinelibrary.wiley.com/doi/pdf/10.1002/spe.4380211102},
	year = {1991}
}

@article{Hestenes1952,
	title={Methods of conjugate gradients for solving linear systems},
	author={Magnus R. Hestenes and Eduard Stiefel},
	journal={Journal of research of the National Bureau of Standards},
	year={1952},
	volume={49},
	pages={409-435},
	url={https://api.semanticscholar.org/CorpusID:2207234}
}

@Article{Golumbic1977,
	author={Golumbic, M. C.},
	title={The complexity of comparability graph recognition and coloring},
	journal={Computing},
	year={1977},
	month={Sep},
	day={01},
	volume={18},
	number={3},
	pages={199-208},
	issn={1436-5057},
	doi={10.1007/BF02253207},
	url={https://doi.org/10.1007/BF02253207}
}

@techreport{Shewchuk1994,
	author = {Shewchuk, Jonathan R},
	title = {An Introduction to the Conjugate Gradient Method Without the Agonizing Pain},
	year = {1994},
	publisher = {Carnegie Mellon University},
	address = {USA}
}

@InProceedings{Lyubashevsky2015,
	author="Lyubashevsky, Vadim
	and Prest, Thomas",
	editor="Oswald, Elisabeth
	and Fischlin, Marc",
	title="Quadratic Time, Linear Space Algorithms for Gram-Schmidt Orthogonalization and Gaussian Sampling in Structured Lattices",
	booktitle="Advances in Cryptology -- EUROCRYPT 2015",
	year="2015",
	publisher="Springer Berlin Heidelberg",
	address="Berlin, Heidelberg",
	pages="789--815",
	isbn="978-3-662-46800-5"
}

@InProceedings{Duerrschnabel2021,
	author="D{\"u}rrschnabel, Dominik
	and Stumme, Gerd",
	editor="Braud, Agn{\`e}s
	and Buzmakov, Aleksey
	and Hanika, Tom
	and Le Ber, Florence",
	title="Force-Directed Layout of Order Diagrams Using Dimensional Reduction",
	booktitle="Formal Concept Analysis",
	year="2021",
	publisher="Springer International Publishing",
	address="Cham",
	pages="224--240",
	isbn="978-3-030-77867-5"
}

@InProceedings{Zschalig2005,
	author="Zschalig, Christian",
	editor="Ganter, Bernhard
	and Godin, Robert",
	title="Planarity of Lattices",
	booktitle="Formal Concept Analysis",
	year="2005",
	publisher="Springer Berlin Heidelberg",
	address="Berlin, Heidelberg",
	pages="391--402",
	isbn="978-3-540-32262-7"
}

@InProceedings{Zschalig2006,
	author="Zschalig, Christian",
	editor="Missaoui, Rokia
	and Schmidt, J{\"u}rg",
	title="Characterizing Planar Lattices Using Left-Relations",
	booktitle="Formal Concept Analysis",
	year="2006",
	publisher="Springer Berlin Heidelberg",
	address="Berlin, Heidelberg",
	pages="280--290",
	isbn="978-3-540-32204-7"
}

@InProceedings{Zschalig2007,
	author="Zschalig, Christian",
	editor="Kuznetsov, Sergei O.
	and Schmidt, Stefan",
	title="Bipartite Ferrers-Graphs and Planar Concept Lattices",
	booktitle="Formal Concept Analysis",
	year="2007",
	publisher="Springer Berlin Heidelberg",
	address="Berlin, Heidelberg",
	pages="313--327",
	isbn="978-3-540-70901-5"
}

@inproceedings{ZschaligFDP,
	author = {Zschalig, Christian},
	booktitle = {CLA},
	editor = {Eklund, Peter W. and Diatta, Jean and Liquiere, Michel},
	ee = {http://ceur-ws.org/Vol-331/Zschalig.pdf},
	publisher = {CEUR-WS.org},
	series = {CEUR Workshop Proceedings},
	title = {An FDP-Algorithm for Drawing Lattices},
	volume = 331,
	year = 2007
}

@InProceedings{Purchase1997,
	author="Purchase, Helen",
	editor="DiBattista, Giuseppe",
	title="Which aesthetic has the greatest effect on human understanding?",
	booktitle="Graph Drawing",
	year="1997",
	publisher="Springer Berlin Heidelberg",
	address="Berlin, Heidelberg",
	pages="248--261",
	isbn="978-3-540-69674-2"
}

@InProceedings{Purchase2000,
	author="Purchase, Helen C.
	and Allder, Jo-Anne
	and Carrington, David",
	editor="Marks, Joe",
	title="User Preference of Graph Layout Aesthetics: A UML Study",
	booktitle="Graph Drawing",
	year="2001",
	publisher="Springer Berlin Heidelberg",
	address="Berlin, Heidelberg",
	pages="5--18",
	isbn="978-3-540-44541-8"
}

@article{Biehl2017,
	author = {Biehl, Martin},
	year = {2017},
	month = {04},
	pages = {},
	title = {Formal approaches to a definition of agents},
	doi = {10.48550/arXiv.1704.02716}
}

@article{Duerrschnabel2019,
  author       = {Dominik D{\"{u}}rrschnabel and
                  Tom Hanika and
                  Gerd Stumme},
  title        = {DimDraw - {A} novel tool for drawing concept lattices},
  journal      = {CoRR},
  volume       = {abs/1903.00686},
  year         = {2019},
  url          = {http://arxiv.org/abs/1903.00686},
  eprinttype    = {arXiv},
  eprint       = {1903.00686},
  bibsource    = {dblp computer science bibliography, https://dblp.org}
}

@article{Duerrschnabel2023,
  author       = {Dominik D{\"{u}}rrschnabel and
                  Tom Hanika and
                  Gerd Stumme},
  title        = {Drawing Order Diagrams Through Two-Dimension Extension},
  journal      = {CoRR},
  volume       = {abs/1906.06208},
  year         = {2019},
  url          = {http://arxiv.org/abs/1906.06208},
  eprinttype    = {arXiv},
  eprint       = {1906.06208},
  bibsource    = {dblp computer science bibliography, https://dblp.org}
}

@inproceedings{Duerrschnabel2025,
  author    = {D\"{u}rrschnabel, Dominik},
  title     = {odis: A {R}ust {L}ibrary and {W}eb {GUI} for {FCA}},
  series    = {Conceptual Knowledge Software: Recent Advancements and Examples},
  year      = {2025},
  url       = {https://www.kde.cs.uni-kassel.de/consoft/assets/odis.pdf}
}

@incollection{Barrett2009,
	author = {Clark Barrett and Roberto Sebastiani and Sanjit Seshia and Cesare Tinelli},
	editor = {Armin Biere and Marijn J. H. Heule and Hans van Maaren and Toby Walsh},
	title = {Satisfiability Modulo Theories},
	booktitle = {Handbook of Satisfiability},
	series = {Frontiers in Artificial Intelligence and Applications},
	volume = {185},
	chapter = {26},
	pages = {825--885},
	publisher = {IOS Press},
	month = feb,
	year = {2009}
}

@InProceedings{DeMoura2008,
	author="de Moura, Leonardo
	and Bj{\o}rner, Nikolaj",
	editor="Ramakrishnan, C. R.
	and Rehof, Jakob",
	title="Z3: An Efficient SMT Solver",
	booktitle="Tools and Algorithms for the Construction and Analysis of Systems",
	year="2008",
	publisher="Springer Berlin Heidelberg",
	address="Berlin, Heidelberg",
	pages="337--340",
	isbn="978-3-540-78800-3"
}

@INPROCEEDINGS{Amar2005,
	author={Amar, R. and Eagan, J. and Stasko, J.},
	booktitle={IEEE Symposium on Information Visualization, 2005. INFOVIS 2005.}, 
	title={Low-level components of analytic activity in information visualization}, 
	year={2005},
	volume={},
	number={},
	pages={111-117},
	doi={10.1109/INFVIS.2005.1532136}
}

@InProceedings{Mooney2025,
	author =	{Mooney, Gavin J. and Hegemann, Tim and Wolff, Alexander and Wybrow, Michael and Purchase, Helen C.},
	title =	{{Universal Quality Metrics for Graph Drawings: Which Graphs Excite Us Most?}},
	booktitle =	{33rd International Symposium on Graph Drawing and Network Visualization (GD 2025)},
	pages =	{30:1--30:20},
	series =	{Leibniz International Proceedings in Informatics (LIPIcs)},
	ISBN =	{978-3-95977-403-1},
	ISSN =	{1868-8969},
	year =	{2025},
	volume =	{357},
	editor =	{Dujmovi\'{c}, Vida and Montecchiani, Fabrizio},
	publisher =	{Schloss Dagstuhl -- Leibniz-Zentrum f{\"u}r Informatik},
	address =	{Dagstuhl, Germany},
	URL =		{https://drops.dagstuhl.de/entities/document/10.4230/LIPIcs.GD.2025.30},
	URN =		{urn:nbn:de:0030-drops-250162},
	doi =		{10.4230/LIPIcs.GD.2025.30}
}

@misc{NoehreDimFlux,
	author       = {Nöhre, Marcel and
					Dürrschnabel, Dominik and
					Ganter, Bernhard and
					Stumme, Gerd},
	title        = {A Visual Benchmark of DimFlux: Comparison of Line
					Diagrams for Concept Lattices
					},
	month        = mar,
	year         = 2026,
	publisher    = {Zenodo},
	doi          = {10.5281/zenodo.20280441},
	url          = {https://doi.org/10.5281/zenodo.20280441},
}

\end{document}